\newtheorem{proposition}{Proposition}
\newtheorem{lemma}{Lemma}
\DeclareMathOperator*{\E}{\mathbb{E}}
\newcommand{\Cov}{\mathrm{Cov}}
\newcommand*\samethanks[1][\value{footnote}]{\footnotemark[#1]}
\title{An Improved Tobit Kalman Filter with Adaptive Censoring Limits}
\author
{
	Kostas~Loumponias
	\thanks{K. Loumponias and G. Tsaklidis are with the Department of Mathematics, Aristotle University of Thessaloniki, GR-54124, Thessaloniki, Greece (e-mail: kostikasl@math.auth.gr; tsaklidi@math.auth.gr) }
	\And
	Nicholas~Vretos 
	\thanks{N. Vretos and P. Daras are with the Information Technologies Institute, Centre for Research
		and Technology - Hellas, 6th km Charilaou - Thermi, GR-57001, Thessaloniki,
		Greece, (e-mail:vretos@iti.gr; daras@iti.gr) }
	\And
	George~Tsaklidis \samethanks[1]
	\And
	Petros~Daras \samethanks[2]	
}
\begin{document}
	
\maketitle

\begin{abstract}
This paper deals with the Tobit Kalman filtering (TKF) process when the measurements are correlated and censored.  The case of interval censoring, i.e., the case of measurements which belong to some interval with given censoring limits, is considered. Two improvements of the standard TKF process are proposed,  in order to estimate the hidden state vectors. Firstly, the exact covariance matrix of the censored measurements is calculated by taking into account the censoring limits. Secondly, the probability of a latent (normally distributed) measurement to belong in or out of the uncensored region is calculated by taking into account the Kalman residual. The designed algorithm is tested using both synthetic and real data sets. The real data set includes human skeleton joints’ coordinates captured by the Microsoft Kinect II sensor. In order to cope with certain real-life situations that cause problems in human skeleton tracking, such as (self)-occlusions, closely interacting persons etc., adaptive censoring limits are used in the proposed TKF process. Experiments show that the proposed method outperforms other filtering processes in minimizing the overall Root Mean Square Error (RMSE) for synthetic and real data sets. 

\end{abstract}

\keywords{Censored data \and Adaptive Tobit Kalman filter \and  Human skeleton tracking.}

\section{Introduction}

Human skeleton motion tracking has been studied for several decades and remains a highly active research field due to its importance in several diverse domains like surveillance applications, medical applications, serious games, educational applications, high performance sports monitoring and others \cite{zhang2013activity}-\nocite{li2015adaptive}\nocite{6626306}\cite{7467541}. With the advent of commercial RGB-D sensors \cite{destelle2014low}, \cite{moon2016multiple},  human skeleton motion tracking has attracted a lot of attention due to the capacity of the sensors to reliably track skeletal joints. However, regardless of the significant progress that has been achieved in both sensors' developement and human skeleton motion tracking research, many applications require more accurate tracking of the human skeleton position and motion. On the sensors' side, high performing sensors (such as the Vicon System), which are able to accurately track at high rates, are very expensive and cumbersome to deploy. On the other hand, affordable, commercial RGB-D solutions (i.e., the Microsoft Kinect, the Xtion Pro and others) often produce low quality human skeleton motion tracking due to their inherent problems (low sampling frequency, moderate depth resolution, UV light interferences, etc.) and also due to their simplistic setup (usually only one such sensor is deployed, resulting in occluding areas and human self-occlusion). 

To overcome these issues and provide an affordable  and, at the same time, reliable solution to the human skeleton motion tracking task, research has been steered towards two general categories of methods: methods that exploit multiple RGB-D sensors \cite{moon2016multiple}, \cite{asteriadis2013estimating} and methods that use various filters able to improve and smooth the sensors' measurements \cite{berti2012kalman}-\nocite{du2015hierarchical}\cite{edwards2014low}.  For the first category, we are confronted with two major flaws: 1) the increase of the cost for monitoring, capturing and processing, and 2) the interferences between devices, which add more noise and restrictions to the problem at hand, thus, making it harder to solve. For the latter, the main drawback is the lack of a framework able to provide reliable estimations of the human skeleton joints. 

In this paper, we introduce a new method, which belongs to the second category of methods. We improve the human skeleton motion tracking by smoothing the Kinect skeleton joints' measurements through a novel Kalman type filtering method adapted to restrictive conditions concerning human skeleton movements. The measurements that we correct and smooth are the 25 Kinect's V2 skeletal joints, which are time series of 3D spatial coordinates  in a 3D space centred in the physical centre of the Kinect's infrared sensor. 

In the literature, in order to smooth spatial coordinates (or a signal), various filters, e.g. Kalman Filter (KF) \cite{kalman1960new}, \cite{lim2000kalman}, Savitzky-Golay filter (SGF) \cite{savitzky1964smoothing}, Particle Filtering \cite{arulampalam2002tutorial} and others have been proposed. One of the most common filters for signal smoothing is KF under the assumption that the singal's measurements are normally distributed. However, KF performs a poor smoothing when the noisy signal contains some extreme measurements (outliers). Then, the hypothesis of normally distributed measurements turns out to be inappropriate. In the case where certain bounds of the denoised signal's values are considered, we can deal with the extreme measurements by providing this information in the KF process.  In order to deal with that, we introduce the censored normal distribution in the KF estimation procedure \cite{moore1956estimation}, \cite{hampshire1992tobit}. The use of censored probabilities theory in data filtering was firstly introduced in \cite{allik2014tobit}, where the Tobit Kalman Filter (TKF) was proposed aiming to estimate an unknown state vector, $ \textbf{x} $, when censored measurements, $\textbf{y}$, are present. In our previous works \cite{loumponias2016},\cite{loumponias2016using}, TKF was utilized in order to filter spatial coordinates of human skeleton, however, no proofs for the TKF process were provided.

In this paper, we propose a new filter, the so-called Adaptive Tobit Kalman Filter (ATKF), which considers an occluded or self-occluded Kinect's skeletal joint as a censored measurement. Our work takes advantage of the approaches presented in  \cite{allik2014tobit}-\cite{loumponias2016using} and proposes a new proof. The proposed approach results in a more accurate estimation of the probability of a measurement to fall into the censoring region and as a consequence, it leads to a more accurate estimation of the state. The proposed ATKF also adapts its censoring region at each time step by considering previous states. The main contributions of this paper are:
\begin{enumerate}
	\item A proof for accurately calculating the covariance matrix of the censored measurements in Tobit Kalman filtering, by incorporating the censoring limits into the equation of censored covariance.
	\item A proof for accurately calculating the probabilities of a latent measurement, $ y^*$, to belong in or out of the uncensored region, by taking into consideration the Kalman residual. 
	\item A new Adaptive Tobit Kalman Filter able to adapt the censoring limits at each time step.
	\item As an application of contributions 1,2 and 3, a new method, which improves the human skeleton tracking in real-time applications is provided. 
	\item A new evaluation metric for human skeleton motion filtering to measure the performance of a filtering technique, when no ground truth data are available.
\end{enumerate}

The rest of the paper is organised as follows. In Section 2, related works are described, while in Section 3, the proposed Adaptive Tobit Kalman Filter is presented in detail. In Section 4, experimental results are drawn, using artificial data as well as real human skeleton motion tracking data. Finally, Section 5, concludes the paper.

\section{Related Work}

Many approaches exist for filtering and motion tracking of the human skeleton  either from images, videos or depth information. We  mention only methods that are most relevant to our work (based on data filtering). For a more detailed discussion we refer to the books \cite{grewal2011kalman} and \cite{moeslund2011visual} for data filtering and human skeleton motion, respectively. 

Similar to our work, Microsoft \cite{skeljoin} proposed various filters for smoothing human skeleton motion data from Kinect devices. Two of them are the simple and the exponential moving average \cite{wei1994time}, \cite{savitzky1964smoothing}, but there is not any reference on how the time windows and the weights should be chosen, since these are application dependent. Edwards \textit{et al.} \cite{edwards2014low} smoothed human skeleton motion data (obtained by a Kinect V2 sensor) using four different filters: 1) the moving average, 2) KF, 3) the Holt double exponential filter \cite{kalekar2004time} and 4) their proposed filter, consisting of a Kalman filter with a Wiener Process Acceleration (WPA) \cite{whitmore1997modelling}. Both the averaging filter and KF had a good smoothing performance but they introduced relatively large amounts of latency, while the other two had good performance and low latency. Finally, the WPA Kalman filter exhibited the best overall performance.

Regarding the filtering process {\it per se}  the most known and well established filtering method is the Kalman Filter (KF). In order to overcome several drawbacks of KF (mainly due to its linear nature), the Extended Kalman Filter (EKF) was proposed in \cite{la1996design}. Although EKF is not an optimal estimator as its linear counterpart, it has been proved that it performs better than KF in terms of smoothing and correcting signals in problems that are non-linear, as is the case in most of the real-life problems. However, EKF tends to be unstable in many applications due to its local nature, leading to incorrect smoothing of a signal that exhibits a high degree of non-linearities. To overcome these problems, the Unscented Kalman Filter (UKF) was proposed in \cite{larsen2011unscented},\cite{gustafsson2012some}. UKF uses a deterministic sampling technique known as unscented transform \cite{julier2002scaled} to gather a minimal set of points around a local mean. By doing so, it provides better results than EKF when the predict and the update functions are highly non-linear and EKF has typically poor performance. Finally, a very successful method is the particle filtering \cite{deutscher2000articulated}, which is a Monte Carlo based filtering method. Though particle filtering is generally very adaptable, it requires a high computational burden, making it practically unsuitable for many real-time applications. 

In the area of censored statistics, all the above mentioned methods have their drawbacks. In Allik \cite{allik2014tobit}, it is stated that the formulation of a standard KF, as an estimator for censored data,  results in a biased estimation of the unknown state. EKF suffers from an undefined Jacobian at the censored region, resulting in an ill-posed Jacobian and thus exhibiting poor performance. On the other hand, UKF  is a less computationally expensive approach, however it is proven to be non-robust when the measurements are close to the censored region \cite{allik2014tobit}. Furthermore, while particle filtering is suitable for estimating the state values when the measurements are censored in certain cases, it has a substantial computational cost. Finally, TKF provides unbiased, recursive estimates of the latent state variables in/near the uncensored regions. TKF is completely recursive and computationally inexpensive, making it a perfect candidate for real-time applications such as the human skeleton motion tracking. Nevertheless, TKF neither takes into account the censored area in calculating the censored measurements variance nor it adapts the limits of the censored area \cite{allik2016tobit}. 

Fei Han \textit{et al.} \cite{han2018improved} concerned TKF for a class of linear discrete-time system with random parameters. The elements of the state space matrices  are allowed to be random variables in order to reflect the reality. Furthermore, they established a novel weighting covariance formula to address the quadratic terms associated with the random matrices. Although, their proposed method with only one censoring limit is coped.

In the area of human skeleton motion tracking, several methods have been proposed involving multiple RGB-D sensors, increasing the complexity and the cost of the solution as mentioned before. In \cite{moon2016multiple}, Sungphil \textit{et al.}, proposed a new method for human skeleton motion tracking using multiple Kinect V1 sensors. They determined the reliability of each 3D joint's position, by combining multiple observations based on Kinect measurements confidence (a value gathered from the sensor). They used the variances of measurements noise in order to identify the contribution of an observation (i.e., a weight) to create a series of fused measurements. Furthermore, they explained how to estimate this variance for each joint through KF. Finally, they presented the average 3D position error of ten activities produced by: 1)  their method, 2) a single Kinect and 3) a simple-average. In all activities  but one (running), their method appeared to give better results than other methods compared with other methods provided in the paper.

Finally, it is worth to mention works on activity recognition that use human skeleton motion filtering as an initial step. In \cite{du2015hierarchical}, \cite{zhu2016co}, a simple SGF is used in order to correct the data. This is achieved through a convolution process by fitting successive subsets of adjacent observations with a low-degree polynomial in a least squares sense \cite{gorry1990general}. Amor \textit{et al.} \cite{amor2016action} dealt with human activity recognition as well, achieving state-of-the-art classification results by using RGB-D sensors. They represented human body as dynamical skeletons and they studied the evolution of the skeletons' shapes as trajectories on manifolds. They performed a median filtering in the temporal dimension in order to de-noise the skeletons' trajectories before using their proposed method.

\section{Proposed Method}

In this section, we briefly describe the censoring data theory and the well-known TKF \cite{allik2014tobit} in order to better highlight the proposed contributions. Then, we demonstrate an alternative approach to the classical TKF, where the update function is generated by taking into account the censoring limits in the measurements covariance matrix calculation, and thus, resulting in a more accurate evaluation of the censored measurements. Finally, we introduce ATKF for human skeleton motion tracking, where the censored region limits (boundaries) are not constant, as is the case in the standard TKF.
\subsection{ Censored and Truncated Data }
Censoring is a condition in which the value of a measurement or observation is only partially known \cite{turnbull1976empirical}. Censoring occurs when a value falls outside the range of a measuring instrument. For example, a bathroom scale might only measure up to 140 kg. If an 150 kg individual is weighed using that scale, the observer would only know that the individual's weight is at least 140 kg (partially known). Censoring should not be confused with the related idea of  truncation; while by censoring, observations result either in knowing the exact value that applies or in knowing that the value lies into an interval, in the truncation case, only observations in a given range are considered by ignoring all the others. Different types of censoring exist \cite{miller2011survival}:
\begin{itemize}
	\item{\textbf{Left censoring}: a data point is below a certain value but it is unknown by how much.}
	\item{\textbf{Interval censoring}: a data point is somewhere on an interval between two values.}
	\item{\textbf{Right censoring}: a data point is above a certain value but it is unknown by how much.}
	\item{\textbf{Type I censoring}: occurs if an experiment has a set number of subjects or items and stops the experiment at a predetermined time, at which point any subjects remaining are right-censored.}
	\item{\textbf{Type II censoring}: occurs if an experiment has a set number of subjects or items and stops the experiment when a predetermined number are observed to have failed; the remaining subjects are then right-censored.}
	\item{\textbf{Random (or non-informative) censoring}: each subject has a censoring time that is statistically independent of its failure time.}
\end{itemize}
In real-life problems, censored data are very frequent and to the best of our knowledge the concept of censoring in human skeleton motion tracking has never been used before.
\subsection{Tobit Kalman Filters}
Tobit Kalman filters \cite{allik2014tobit}, \cite{allik2016tobit}, \cite{tobin1958estimation},  provide a classification scheme for all aforementioned types of censoring. In the case of scalar measurements, the Tobit model is called censored regression model and is  characterised by the stochastic difference non-linear equation
\[ y^*_k=hx_k+v_k,  \]
\begin{equation}
y_k=\begin{cases} y^*_k,&a<y^*_k<b\\
a,&y^*_k \leq a\\
b,&y^*_k \geq b,
\label{eq:tobit}
\end{cases}
\end{equation}
where $ y_k $, $y^*_k$  stand for the censored measurement and the latent variable, respectively, $ h $ is a multiplicative scalar and $a$, $b$ are the lower and  the upper limits of the uncensored region, respectively. The random variable $ v_k $ is drawn from a Gaussian distribution with mean $ 0 $ and variance $ \sigma^2_v $. From (\ref{eq:tobit}), it is obvious that the TKF process is a non-linear one, since when the latent measurement $y^*_k $ falls outside the uncensored region, the censored measurement $ y_k $ does not depend on the variable $ x_k $.

As has been already stated, KF does not provide optimal or unbiased estimates for the states when the measurements are in the censored region. This is due to the fact that the assumptions of KF \cite{harvey1990forecasting} are not met when the noise measurements are censored.

The scalar case can be easily extended to the general case TKF, which is defined as,

\[\textbf{x}_{k+1}=\textbf{A}\textbf{x}_k+\textbf{w}_k,\] \[\quad\textbf{y}^*_k=\textbf{H}\textbf{x}_k+\textbf{v}_k,\]
\begin{equation}
\label{eq:Tobit}
y_{k,i}=\begin{cases} y^*_{k,i},&a_i<y^*_{k,i}<b_{i}\\
a_i&y^*_{k,i} \leq a_i\\
b_i,&y^*_{k,i} \geq b_i,
\end{cases}
i=1,2,...,m\\
\in\mathbb{N},
\end{equation}
where $k$ stands for the discrete time step and $ \textbf{w}_k$ and $ \textbf{v}_k$ are random vector variables following $ N(\textbf{0},\textbf{Q}_k) $ and $ N(\textbf{0},\textbf{R}_k) $, respectively, where  $N(\boldsymbol{\mu},\mathbf{\Sigma})$ denotes the normal distribution with mean $\boldsymbol{\mu}$ and covariance matrix $\mathbf{\Sigma}$. $ \textbf{A}$ and $\textbf{H} $ are the transition and the observation matrices, respectively, while $\textbf{y}_k=(y_{k,i})_{i=1}^{m}, \textbf{y}^*_k=(y^*_{k,i})_{i=1}^{m} $  are the saturated observations (that are Left and Right censoring at the same time), and the latent observations, respectively. Finally, $ m $ designates the dimensionality of the process (which is three in the case of 3D human skeleton motion data). The predict and the update functions of TKF for saturated measurements are described in detail in \cite{allik2014tobit}.

\subsection{Censored Moments}

In this section, we calculate the first, the second moment of a censored measurement $ \textbf{y}=\{y_i\}_{i=1,...,m} $ (no truncated) and the covariance. For that purpose the following Proposition is needed \cite{bg2009moments}: 

\begin{proposition} \label{truncated}
	If the random variable $ \textbf{y}^*=\{y^*_i\}_{i=1,...,m}$ follows a $ m-D $ normal distribution with density function $ f( \textbf{y}^*) $, mean value $ \boldsymbol{\mu}=\{\mu_i\}_{i,...,m}$ and non-singular covariance matrix $ \boldsymbol{\Sigma} = (\sigma_{i,j})_{i,j=1,...,m},$ then, the expected values of $ y^*_{i}$ and $y^*_i\cdot y^*_j $	given that $a_k<y^*_k<b_k$, $ k=1,...,m$, are:
	\begin{equation}
	\E(y^*_i|a_k<y^*_k<b_k,k=1,...,m)=\mu_i + \sum_{k=1}^m \sigma_{i,k}\big(F_k(a_k)-F_k(b_k)\big),\\
	\label{trun_mean}
	\end{equation}	 
	\begin{equation}
	\begin{split}	
	\E(y^*_iy^*_j/a_k<y^*_k<&b_k,k=1,...,n)=\\
	&=\mu_i\cdot \mu_j  + \sigma_{i,j}+\sum_{k=1}^{m}\sigma_{i,k}\frac{\sigma_{j,k}(a_kF_k(a_k)-b_kF_k(b_k))}{\sigma_{k,k}}\\
	&+\sum_{k=1}^{m}\sigma_{i,k}\sum_{q\neq i}\Big(\sigma_{j,q} -\frac{\sigma_{k,q}\sigma_{j,k}}{\sigma_{k,k}}  \Big)\big[\big(F_{k,q}(a_k,a_q)-F{k,q}(a_k,b_q)\big)\\ 
	&-\big(F_{k,q}(b_k,a_q)-F_{k,q}(b_k,b_q)\big) \big] .
	\end{split}
	\label{trun_joint}
	\end{equation}	
\end{proposition}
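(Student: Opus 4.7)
The plan is to derive both moment formulas from a single Gaussian integration-by-parts (Stein-type) identity applied to the density $f$ over the hyperrectangle $R=\prod_{k=1}^m(a_k,b_k)$. Since $f$ is the density of $N(\boldsymbol{\mu},\boldsymbol{\Sigma})$, its log-gradient satisfies $\nabla f(\mathbf{y}^*)=-f(\mathbf{y}^*)\boldsymbol{\Sigma}^{-1}(\mathbf{y}^*-\boldsymbol{\mu})$, which, multiplied by $\boldsymbol{\Sigma}$, yields the workhorse identity
\[
(y_i^*-\mu_i)\,f(\mathbf{y}^*)=-\sum_{k=1}^m \sigma_{i,k}\,\frac{\partial f}{\partial y_k^*}(\mathbf{y}^*).
\]
Everything that follows is integration of this identity (or of $y_j^*$ times this identity) over $R$, together with careful handling of the boundary terms that arise from the fundamental theorem of calculus.

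For equation (\ref{trun_mean}) I would integrate both sides of the Stein identity over $R$ and divide by $P=P(a_k<y_k^*<b_k,\ k=1,\dots,m)$. On the right, for each $k$ the antiderivative in the $y_k^*$ variable collapses to the difference of $f$ evaluated on the slabs $y_k^*=b_k$ and $y_k^*=a_k$, with the remaining $m-1$ coordinates still integrated over their intervals. Defining $F_k(c)$ so that $P\cdot F_k(c)$ is precisely this restricted integral of $f$ on the hyperplane $y_k^*=c$, equation (\ref{trun_mean}) falls out at once.

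For equation (\ref{trun_joint}) the plan is to multiply the Stein identity by $y_j^*$ before integrating over $R$ and apply the Leibniz rule $y_j^*\partial_{y_k^*}f=\partial_{y_k^*}(y_j^*f)-\delta_{jk}f$. The $\delta_{jk}$ contribution produces the $\sigma_{i,j}$ term after dividing by $P$. The boundary terms coming from $\partial_{y_k^*}(y_j^*f)$ split naturally into two cases: when $j=k$, the factor $y_j^*$ is pinned to $a_k$ or $b_k$ on the slab and, after extracting the Jacobian factor $1/\sigma_{k,k}$ inherent in the Gaussian normalization, one recovers the term $\sigma_{i,k}\sigma_{j,k}(a_kF_k(a_k)-b_kF_k(b_k))/\sigma_{k,k}$; when $j\neq k$, the factor $y_j^*$ remains free and I factor $f$ on the slab $y_k^*=c$ as marginal$(y_k^*=c)\times$conditional$(\mathbf{y}^*_{-k}\mid y_k^*=c)$. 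The conditional is again Gaussian with mean $\mu_j+\sigma_{j,k}(c-\mu_k)/\sigma_{k,k}$ and cross-covariances $\sigma_{j,q}-\sigma_{k,q}\sigma_{j,k}/\sigma_{k,k}$, so applying equation (\ref{trun_mean}) recursively to this $(m-1)$-dimensional truncated normal yields precisely the inner sum in $q$ with the two-dimensional boundary factors $F_{k,q}$ evaluated at the four corners. Finally I recombine via $\E[y_i^*y_j^*]=\mu_i\mu_j+\E[(y_i^*-\mu_i)(y_j^*-\mu_j)]+\mu_i(\E[y_j^*]-\mu_j)+\mu_j(\E[y_i^*]-\mu_i)$; the Schur-complement structure of the conditional covariance forces the two unwanted $\mu$-cross sums to telescope into the stated compact form.

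The main obstacle will be the bookkeeping in the case $j\neq k$: keeping the factorization of $f$ into marginal times conditional synchronized with a consistent definition of $F_{k,q}$ in terms of the two-variable boundary density, and tracking signs (each $y_k^*$-boundary contributes with a sign from the upper-versus-lower limit, which then combines with the inner $q$-integral to produce the exact bracketed pattern $[(F_{k,q}(a_k,a_q)-F_{k,q}(a_k,b_q))-(F_{k,q}(b_k,a_q)-F_{k,q}(b_k,b_q))]$). A subsidiary subtlety is that the identity in the paper writes $\sum_{q\neq i}$, which I believe should be read as $\sum_{q\neq k}$ in light of the derivation above—this should be verified once all terms are explicitly assembled.
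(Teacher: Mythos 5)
The paper itself offers no proof of Proposition \ref{truncated}: it is imported verbatim from the cited reference of Manjunath and Wilhelm, who obtain it by differentiating the moment generating function of the doubly truncated normal (Tallis's method). Your Stein-identity/integration-by-parts route is the real-space counterpart of that computation and is perfectly legitimate; carried out as you describe, it does yield (\ref{trun_mean}) exactly, with $P\cdot F_k(c)$ identified as the integral of $f$ over the slab $y^*_k=c$. Your suspicion that $\sum_{q\neq i}$ must be read as $\sum_{q\neq k}$ is also correct: the inner sum arises from applying (\ref{trun_mean}) to the $(m-1)$-dimensional conditional law on the slab $y^*_k=c$, in which it is the $k$-th coordinate that has been removed.

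The genuine gap is the last step of your argument for (\ref{trun_joint}): the ``unwanted $\mu$-cross sums'' do not telescope away. If you finish your own computation you are left with the residual
\[
\Delta=\mu_i\sum_{k=1}^{m}\sigma_{j,k}\big(F_k(a_k)-F_k(b_k)\big)+\mu_j\sum_{k\neq j}\sigma_{i,k}\big(F_k(a_k)-F_k(b_k)\big)-\sum_{k\neq j}\frac{\sigma_{i,k}\sigma_{j,k}\mu_k}{\sigma_{k,k}}\big(F_k(a_k)-F_k(b_k)\big),
\]
which vanishes only when $\boldsymbol{\mu}=\mathbf{0}$. This is not something you can repair by cleverer bookkeeping, because (\ref{trun_joint}) as printed is the zero-mean formula of the source with $\mu_i\mu_j$ appended, and it is false for general $\boldsymbol{\mu}$: already for $m=1$ it gives $\E(y^{*2}_1\mid a_1<y^*_1<b_1)=\mu_1^2+\sigma_{1,1}+\sigma_{1,1}\big(a_1F_1(a_1)-b_1F_1(b_1)\big)$, which differs from the standard one-dimensional truncated second moment --- the one the paper itself uses in (\ref{moment2}) --- by exactly $\mu_1\sigma_{1,1}\big(F_1(a_1)-F_1(b_1)\big)$. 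So your derivation is the right one, but it cannot terminate at the stated identity: you must either prove the centered case and transfer by translation (which changes the displayed formula, replacing $a_kF_k(a_k)-b_kF_k(b_k)$ by $(a_k-\mu_k)F_k(a_k)-(b_k-\mu_k)F_k(b_k)$ and adding the two $\mu$-cross sums), or state explicitly that (\ref{trun_joint}) holds only for $\boldsymbol{\mu}=\mathbf{0}$.
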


The functions $ F_i(x) $ and $ F{i,j}(x,y) $ are given by: 
\begin{equation}
\begin{split}
&F_i(x)=\frac{1}{P(a_j<y^*_j<b_j,j=1,...,m)}\cdot
\int_{a_1}^{b_1}...\int_{a_{i-1}}^{b_{i-1}}\int_{a_{i+1}}^{b_{i+1}}...\int_{a_m}^{b_m} f(x,\textbf{y}^*_{-i})\textbf{dy}^*_{-i},
\end{split}
\label{F}
\end{equation}
\begin{equation}
\begin{split}
F{i,j}(x,y)=\frac{1}{P(a_j<y^*_j<b_j,j=1,...,m)} \cdot &
\int_{a_1}^{b_1}..\int_{a_{i-1}}^{b_{i-1}}\int_{a_{i+1}}^{b_{i+1}}..\int_{a_{j-1}}^{b_{j-1}}\int_{a_{j+1}}^{b_{j+1}}..\\&
\int_{a_m}^{b_m} f(x,y,\textbf{y}^*_{-i-j})\textbf{dy}^*_{-i-j}, \\
\label{F2}
\end{split}
\end{equation}
where $ \textbf{y}^*_{-i}=(y^*_1,..,y^*_{i-1},y^*_{i+1},..,y^*_m) $ and $ \textbf{y}^*_{\!\!_{-i-j}}\!\!\!\!\!\!\!=\!(y^*_1,..,y^*_{i-1},y^*_{i+1},..,y^*_{j-1},y^*_{j+1},..,y^*_m). $\\
Next, the following Lemma is provided in order to calculate the censored moments.
\begin{lemma}
	Let $ X $ be a continuous random variable on a probability space $ \Omega $ and $ Z $ a discrete random variable with outcomes $(z_i)_{i=1}^{n}  $. Then, the expected value of the joint probability function $ f_{X,Z}(x,z) $ can be given by \[\E(X,Z)=\sum_{i=1}^{n}z_iE(X|Z=z_i)P(Z=z_i) . \]
	\label{joint_mean}	
\end{lemma}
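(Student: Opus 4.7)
The statement is really the law of total expectation applied to the product $XZ$, so the plan is to interpret $\E(X,Z)$ as $\E(XZ)$ and reduce it to iterated expectation, conditioning on the discrete variable $Z$.

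First I would write $\E(XZ)$ using the tower property (assuming $XZ$ is integrable, which I would state as a standing hypothesis):
\[
\E(XZ) \;=\; \E\!\left[\, \E(XZ \mid Z) \,\right].
\]
Since $Z$ is discrete with outcomes $(z_i)_{i=1}^n$, the outer expectation is a finite sum weighted by the probability mass function of $Z$, giving
\[
\E(XZ) \;=\; \sum_{i=1}^{n} \E(XZ \mid Z = z_i)\, P(Z = z_i).
\]
This is the first key step; it simply rewrites the unconditional expectation as an average of conditional expectations over the atoms of the distribution of $Z$.

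Next, inside each conditional expectation I would use the basic property that a function of the conditioning variable behaves as a constant. On the event $\{Z = z_i\}$ the factor $Z$ equals the deterministic value $z_i$, so it can be pulled out:
\[
\E(XZ \mid Z = z_i) \;=\; z_i\, \E(X \mid Z = z_i).
\]
Substituting this into the previous display yields the claimed identity
\[
\E(XZ) \;=\; \sum_{i=1}^{n} z_i\, \E(X \mid Z = z_i)\, P(Z = z_i).
\]

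There is no genuine obstacle here beyond bookkeeping; the only point requiring mild care is the integrability assumption that justifies both the use of the tower property and the Fubini-type rearrangement that lets one factor out $z_i$ inside the conditional expectation. In the intended application of the lemma (where $X$ will be a coordinate of a truncated Gaussian and $Z$ an indicator of the censoring event), these conditions are automatic, so the lemma delivers the censored moment identities needed later without further qualification.
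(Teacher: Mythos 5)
Your proof is correct and takes essentially the same route as the paper's: both decompose $\E(XZ)$ by conditioning on the atoms of the discrete variable $Z$ and then pull the constant $z_i$ out of each conditional expectation. The paper carries this out concretely by factoring the joint density as $f_{X|Z}(x\mid z_i)P(Z=z_i)$ and interchanging the sum with the integral, whereas you invoke the tower property abstractly; the content is identical, and your explicit integrability caveat is a hypothesis the paper leaves implicit.
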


\begin{proof} 
	We have that
	\[
	\begin{split}
	\E(X,Z)&= \\
	&=\int_{\Omega}\sum_{i=1}^{n}z_i x f_{X,Z}(x,z)dx\\
	&=\int_{\Omega}\sum_{i=1}^{n}z_i x f_{X|Z}(x|z_i)P(Z=z_i)dx\\
	&=\sum_{i=1}^{n}z_i P(Z=z_i)\int_{\Omega}  x f_{X|Z}(x|z_i)dx\\
	&=\sum_{i=1}^{n}z_i P(Z=z_i)\E(X|Z=z_i).\\
	\end{split}
	\]	
\end{proof}
Now the following Proposition can be proved (see Appendix A) using Lemma \ref{joint_mean} and Proposition \ref{truncated}:
\begin{proposition}
	The mean value of the censored variable $ y_i $  with censoring limits $ a_i $ and $ b_i $ (\ref{eq:tobit}), depends only on the censoring limits $a_i, b_i$ and can be written as:
	
	\begin{equation}
	\begin{split}
	E(y_i) = \mu_iP(a_i<y^*_i<b_i)  + \sigma_{i,i}(f_i(a_i)-f_i(b_i)) + a_iP(y^*_i\leq a_i) + b_iP(y^*_i \geq b_i).  
	\end{split}
	\label{cen_mean}
	\end{equation}
	
\end{proposition}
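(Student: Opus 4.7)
The plan is to decompose $y_i$ pointwise into three pieces corresponding to the three branches of the Tobit rule in (\ref{eq:tobit}), take expectations by linearity, and then evaluate the only nontrivial piece via Lemma \ref{joint_mean} together with Proposition \ref{truncated} applied to the (one-dimensional) marginal of $y^*_i$. The key observation that makes the statement clean is that the censoring in (\ref{eq:Tobit}) is applied coordinate-wise, so the event that determines $y_i$ involves $y^*_i$ alone and none of the other $(a_j,b_j)$.

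First, from (\ref{eq:tobit}) I would write
\[
y_i = y^*_i\,\mathbf{1}_{\{a_i<y^*_i<b_i\}} + a_i\,\mathbf{1}_{\{y^*_i\le a_i\}} + b_i\,\mathbf{1}_{\{y^*_i\ge b_i\}},
\]
and take expectations. The two boundary terms immediately contribute $a_iP(y^*_i\le a_i)$ and $b_iP(y^*_i\ge b_i)$. The whole task is then to evaluate $E\!\left[y^*_i\,\mathbf{1}_{\{a_i<y^*_i<b_i\}}\right]$.

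Second, I would apply Lemma \ref{joint_mean} with $X=y^*_i$ (continuous) and $Z=\mathbf{1}_{\{a_i<y^*_i<b_i\}}$ (Bernoulli with outcomes $0,1$). Only the $Z=1$ term survives because of the multiplying factor $z_i$, giving
\[
E\!\left[y^*_i\,\mathbf{1}_{\{a_i<y^*_i<b_i\}}\right]=E\!\left(y^*_i\,\big|\,a_i<y^*_i<b_i\right)\,P(a_i<y^*_i<b_i).
\]
Since the marginal of $y^*_i$ is $N(\mu_i,\sigma_{i,i})$, I would then invoke Proposition \ref{truncated} in the one-dimensional case, where the function $F_i(\cdot)$ in (\ref{F}) reduces to $f_i(\cdot)/P(a_i<y^*_i<b_i)$ with $f_i$ the marginal density. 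This yields
\[
E\!\left(y^*_i\,\big|\,a_i<y^*_i<b_i\right)=\mu_i+\sigma_{i,i}\,\frac{f_i(a_i)-f_i(b_i)}{P(a_i<y^*_i<b_i)}.
\]
Multiplying through by $P(a_i<y^*_i<b_i)$ clears the denominator and produces the first two terms of the target formula, namely $\mu_iP(a_i<y^*_i<b_i)+\sigma_{i,i}\bigl(f_i(a_i)-f_i(b_i)\bigr)$. Adding the two boundary contributions completes the identity.

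There is no real obstacle here; the proof is essentially bookkeeping around the law of total expectation. The only point worth being careful about is the reduction from the multivariate Proposition \ref{truncated} to its univariate marginal form: one has to justify passing from conditioning on all $\{a_k<y^*_k<b_k\}$ to conditioning on the single event $\{a_i<y^*_i<b_i\}$, which is exactly what makes the right-hand side depend only on $a_i,b_i,\mu_i,\sigma_{i,i}$ as claimed. This is legitimate precisely because the Tobit mapping in (\ref{eq:Tobit}) censors each coordinate separately, so $y_i$ is a measurable function of $y^*_i$ alone.
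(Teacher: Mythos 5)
Your proof is correct, but it takes a genuinely more direct route than the paper's. The paper (Appendix A) keeps the full multivariate conditioning: it decomposes the range of the remaining coordinates $\textbf{y}^*_{-i}$ into $3^{n-1}$ regions $R_j$, applies the multivariate truncated-mean formula (\ref{trun_mean}) on each region, and then shows by a telescoping argument (their equation (\ref{first})) that every cross term $\sigma_{i,k}$ with $k\neq i$ sums to zero over the three sub-cases $(-\infty,a_k)$, $(a_k,b_k)$, $(b_k,\infty)$, leaving only the $k=i$ contribution $\sigma_{i,i}\bigl(f_i(a_i)-f_i(b_i)\bigr)$. You instead observe at the outset that the censoring in (\ref{eq:Tobit}) is coordinate-wise, so $y_i$ is a measurable function of $y^*_i$ alone; you then marginalize immediately to the univariate $N(\mu_i,\sigma_{i,i})$ law and apply the one-dimensional truncated-normal mean, which makes the claimed dependence on $a_i,b_i$ only completely transparent and avoids the cancellation bookkeeping entirely. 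Your identity $\E\bigl(y^*_i\,\big|\,a_i<y^*_i<b_i\bigr)=\mu_i+\sigma_{i,i}\tfrac{f_i(a_i)-f_i(b_i)}{P(a_i<y^*_i<b_i)}$ is the correct specialization of (\ref{trun_mean}) with $F_i=f_i/P$. What the paper's heavier machinery buys is reusability: the same region decomposition is the template for Appendix B, where the joint moment $\E(y_iy_j)$ genuinely requires the bivariate marginal and the cross terms no longer all cancel; your shortcut works for the mean and the pure second moment $\E(y_i^2)$ but would need the bivariate analogue for the off-diagonal covariance entries.
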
   

Furthermore, it can be proved (see Appendix B) that: 
\begin{proposition}
	The variance and the joint mean value of the censored variable  $ y_i $  (\ref{eq:tobit}) and $ y_i, y_j $, respectively, depend only on the censoring limits $\{a_i, b_i\}$ and $ \{a_i, b_i, a_j, b_j\} $, respectively, and are given by:
	\begin{equation}
	\begin{split}
	Var(y_i) &= \mu_i^2(1-P_{un,i})P_{un,i} + \sigma_{i,i}P_{un,i} + a^2_i(1-P_{a,i})P_{a,i}  \\
	&+ b^2_i(1-P_{b,i})P_{b,i}-2a_ib_iP_{a,i}P_{b,i} - \sigma_{i,i}^2(f(a_i)-f(b_i))\\ 
	& +2\mu_i\sigma_{i,i}(f_i(a_i)-f(b_i))(1-P_{un,i})\\
	&+\sigma_{i,i}\big((a_i-\mu_i)f_i(a_i)- (b_i-\mu_i)f_i(b_i)\big) \\
	&-2\Big( \mu_iP_{un,i} + \sigma_{i,i}\big(f_i(a_i)-f(b_i)\big)\Big)\Big(a_iP_{a,i} + b_iP_{b,i} \Big)\\ 
	\end{split}
	\label{cen_var}
	\end{equation}
	and	
	\begin{equation}
	\begin{split}
	\E(y_{i}y_{j})&=a_ib_j P(1) +b_ib_j P(3) + a_ia_jP(7) + b_ia_jP(9)\\
	& + b_j\E(y^*_{i}|a_i< y^*_{i} <b_i, y^*_j \geq b_j)P(2)\\
	& + a_i\E(y^*_{j}|a_j< y^*_{j} <b_j, y^*_i \leq a_i)P(4)\\
	& +\E(y^*_{i}y^*_{j}|a_i< y^*_{i} < b_i, a_j < y^*_{j} <b_j)P(5)\\
	& + b_i\E(y^*_{j}|a_j< y^*_{j} <b_j, y^*_i \geq b_i)P(6)\\
	& + a_j\E(y^*_{i}|a_i< y^*_{i} <b_i, y^*_j \leq a_j)P(8).\\
	\end{split}
	\label{cen_joint}
	\end{equation}
	\label{Censored}		
\end{proposition}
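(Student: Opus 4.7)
The plan is to prove both identities by partitioning the sample space according to which of the three regions $\{y_i^*\le a_i\}$, $\{a_i<y_i^*<b_i\}$, $\{y_i^*\ge b_i\}$ each relevant latent coordinate falls in, and then invoking Lemma \ref{joint_mean} as a law of total expectation. The critical structural observation is that in (\ref{eq:Tobit}) the censored observation $y_i$ is a function of $y_i^*$ alone, so the joint distribution of $\textbf{y}^*$ contributes only through marginal (respectively bivariate) laws of $y_i^*$ (respectively $(y_i^*,y_j^*)$); this is precisely why only the limits $\{a_i,b_i\}$ (respectively $\{a_i,b_i,a_j,b_j\}$) appear on the right-hand sides of (\ref{cen_var})--(\ref{cen_joint}).

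For the variance (\ref{cen_var}), I would start from $\Var(y_i)=\E(y_i^2)-(\E(y_i))^2$. Partitioning on the three regions for $y_i^*$ and applying Lemma \ref{joint_mean} gives
\[
\E(y_i^2)=a_i^2P_{a,i}+\E\bigl((y_i^*)^2\mid a_i<y_i^*<b_i\bigr)P_{un,i}+b_i^2P_{b,i},
\]
and the middle conditional second moment follows from Proposition \ref{truncated} specialised to the univariate marginal $N(\mu_i,\sigma_{i,i})$ (i.e.\ $m=1$, $i=j$), yielding $\mu_i^2+\sigma_{i,i}$ plus a boundary contribution in $f_i(a_i),f_i(b_i),(a_i-\mu_i),(b_i-\mu_i)$. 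The mean $\E(y_i)$ supplied by (\ref{cen_mean}) is then squared and subtracted; repeated use of $P_{a,i}+P_{un,i}+P_{b,i}=1$ converts the resulting sums into the $(1-P)P$ structure visible in (\ref{cen_var}), with the cross product $-2a_ib_iP_{a,i}P_{b,i}$ and the squared-density term $-\sigma_{i,i}^2(f_i(a_i)-f_i(b_i))^2$ surfacing naturally.

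For the joint moment (\ref{cen_joint}), I would partition $\mathbb{R}^2$ into the nine rectangles formed by the three regions for $y_i^*$ crossed with the three regions for $y_j^*$. Within each rectangle, the product $y_iy_j$ is deterministic in a predictable way: on the four corner rectangles it is one of the constants $a_ia_j,a_ib_j,b_ia_j,b_ib_j$; on the four edge rectangles it is a constant times a single latent coordinate $y_i^*$ or $y_j^*$ restricted to its truncation strip; and on the interior rectangle it is $y_i^*y_j^*$ under full bivariate truncation. Applying Lemma \ref{joint_mean} region-by-region and matching the nine summands against the numbering $P(1),\ldots,P(9)$ in (\ref{cen_joint}) reproduces the claimed expansion, with the five non-constant conditional expectations provided by (\ref{trun_mean})--(\ref{trun_joint}) applied to the appropriate bivariate (or half-strip) truncation region.

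The main obstacle is the algebraic bookkeeping in the variance proof: the expansion of $(\E(y_i))^2$ produces nine cross products of the terms $\mu_iP_{un,i}$, $\sigma_{i,i}(f_i(a_i)-f_i(b_i))$, $a_iP_{a,i}$ and $b_iP_{b,i}$, and these must be paired carefully against contributions coming from $\E((y_i^*)^2\mid\cdot)P_{un,i}$ and from the identity $P_{a,i}+P_{un,i}+P_{b,i}=1$, so that every factor of $\mu_i^2P_{un,i}$, $a_i^2P_{a,i}$, $b_i^2P_{b,i}$ ends up paired with its $-P^2$ counterpart to form a $(1-P)P$ factor. The nine-region argument for (\ref{cen_joint}) is by contrast conceptually clean; its only delicate point is keeping straight, on each edge rectangle, which of $y_i^*$ or $y_j^*$ remains latent and which has been frozen at $a_\ell$ or $b_\ell$.
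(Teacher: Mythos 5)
Your proof is correct and, in its computational core, follows the same route as the paper's Appendix B: there too, $\E(y_i^2)$ is obtained by conditioning on the three censoring regions of $y_i^*$ via Lemma \ref{joint_mean}, the truncated second moment of the univariate normal is substituted for the middle term, the square of (\ref{cen_mean}) is subtracted, and (\ref{cen_joint}) comes from the nine-rectangle partition of $(y_i^*,y_j^*)$. The one place where you genuinely diverge is in justifying that only $\{a_i,b_i\}$ (resp.\ $\{a_i,b_i,a_j,b_j\}$) enter: you observe up front that $y_i$ is a deterministic function of $y_i^*$ alone, so only the univariate (resp.\ bivariate) marginal law is relevant, whereas the paper conditions on the censoring regions of \emph{all} $m$ coordinates, sums over the $3^{n-2}$ auxiliary regions $G_k$, and shows that the extra contributions telescope to zero exactly as in the cancellation (\ref{first}) of Appendix A. Your marginalisation argument is shorter and cleaner; the paper's route stays entirely inside the multivariate truncated-moment formulas (\ref{trun_mean})--(\ref{trun_joint}), which it needs anyway for the half-strip conditional expectations appearing in (\ref{cen_joint}). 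Two small remarks. First, carrying out $\Var(y_i)=\E(y_i^2)-(\E(y_i))^2$ yields the term $-\sigma_{i,i}^2\bigl(f_i(a_i)-f_i(b_i)\bigr)^2$, exactly as you write; the displayed formula (\ref{cen_var}) omits the exponent, so your squared version is the correct one and the statement should be read as carrying a typo there. Second, the $(1-P)P$ factors do not actually require the identity $P_{a,i}+P_{un,i}+P_{b,i}=1$: each arises directly as $P-P^2$ when the corresponding diagonal term of $(\E(y_i))^2$ is subtracted; this does not affect the validity of your argument.
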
 
The probabilities $ P_{un,i}, P_{a,i}, P_{b,i} $ and $ P(j)_{j=1,...,9} $ are defined as follows:
\[ 
\begin{split}
&P_{un,i} = P(a_i<y^*_i<b_i), P_{a,i}= P(y^*_i \leq a_i),\\
& P_{b,i} = P(y^*_i \geq b_i), P(1)= P(y^*_i\leq a_i, y^*_j\geq b_j),\\
& P(2)= P(a_i <y^*_i< b_i, y^*_j\geq b_j),\\
&P(3)= P(y^*_i\geq b_i, y^*_j\geq b_j),\\ 
&P(4)= P(y^*_i\leq a_i, a_j<y^*_j<b_j),\\
&P(5)= P(a_i<y^*_i<b_i, a_j<y^*_j< b_j),\\
&P(6)= P(y^*_i\geq b_i, a_j<y^*_j< b_j), \\
&P(7)= P(y^*_i\leq a_i, y^*_j\leq a_j),\\  
& P(8)= P(a_i<y^*_i< b_i, y^*_j\leq a_j),\\
& P(9)= P(y^*_i\geq b_i, y^*_j\leq a_j).
\end{split} 
\]
The truncated expected values $ \E(y^*_{i}|\cdot), E(y^*_{i}y^*_{j}|\cdot) $ in (\ref{cen_joint}) are calculated in Appendix B. Hence, the covariance matrix of the censored variable $ \textbf{y}$ can be calculated by (\ref{cen_mean})-(\ref{cen_joint}).

\subsection{Corrected Tobit Kalman Filter}

In this paper as in \cite{han2018improved},\cite{allik2014estimation}, we calculate the a posteriori estimation, $\hat{\textbf{x}}_k $, as a linear combination of the a priori estimation, $ \hat{\textbf{x}}^-_{k}$, and the censored measurement $ \textbf{y}_k $. Although these estimations are not optimal, it is proved that they minimize the trace of state error covariance \cite{masreliez1977robust}. Next, we provide the predict and update function of the proposed $ TKF $.\\
\underline{The Predict function}:
\begin{equation}
\label{eq:first}
\hat{\textbf{x}}^-_{k} = \textbf{A}\hat{\textbf{x}}_{k-1}, \qquad \qquad
\end{equation}
\begin{equation}
\textbf{P}^-_{k} = \textbf{A}\textbf{P}_{k-1}\textbf{A}^T+\textbf{Q}_k.
\end{equation}
\underline{The Update function}:
\begin{equation}
\; \;\textbf{R}_{k,1}=\E \big( (\textbf{x}_{k}-\hat{\textbf{x}}^-_{k})(\textbf{y}_k-\E(\textbf{y}_k))^T | \textbf{y}_{k-1} \big), 
\end{equation}
\begin{equation}
\qquad \textbf{R}_{k,2}=\E \big( (\textbf{y}_k-\E(\textbf{y}_k))(\textbf{y}_k-\E(\textbf{y}_k))^T| \textbf{y}_{k-1} \big),
\label{Cen_Cov}
\end{equation}
\begin{equation}
\label{eq:Gain}
\textbf{K}_k = \textbf{R}_{k,1}\textbf{R}_{k,2}^{-1}, \qquad  \qquad  \qquad \qquad  \quad
\end{equation}
\begin{equation}
\label{eq:Aver}
\hat{\textbf{x}}_k=\hat{\textbf{x}}^-_k+\textbf{K}_k(\textbf{y}_k-\E(\textbf{y}_k)), \qquad  \quad \;
\end{equation}
\begin{equation}
\label{eq:final}
\textbf{P}_k=\textbf{P}^-_k-\textbf{K}_k\textbf{R}_{k,1}^T. \qquad  \qquad \qquad  \; \; \,
\end{equation}
The predict function is the same as in case of standard KF \cite{kalman1960new}, since, the censored measurements are not used in this stage. Matrix $ \textbf{R}_{k,1} $ has been calculated in \cite{allik2014estimation} and takes the form 
\begin{equation}
\textbf{R}_{k,1}=\textbf{P}^-_{k}\textbf{H}^T\textbf{D}_{un,k},
\end{equation}
where $ \textbf{D}_{un,k} $ is a $m \times m$ diagonal matrix, and its entries are the probabilities of a measurement to be uncensored,  at time step $ k $. More specifically, the $ i^{th}$ diagonal element of $ \textbf{D}_{un,k} $, is the probability that a latent measurement $ y^*_{k,i} $  belongs to the uncensored region. Furthermore, we denote by $ \textbf{D}_{\textbf{a},k} $, $ \textbf{D}_{\textbf{b},k} $ the diagonal matrices, where its entries are the probabilities of a measurement to be censored from below or above, respectively, at time step $ k $. It is proved (see Appendix C) that:
\begin{equation}
\label{eq:Dunk}
\textbf{D}_{un,k}=diag
\begin{bmatrix}
\Phi(b_{k,1})-\Phi(a_{k,1})\\
... \\
\Phi(b_{k,m})-\Phi(a_{k,m})
\end{bmatrix},	
\end{equation}

\begin{equation}
\label{eq:Dbelow}
\textbf{D}_{\textbf{a},k} =diag
\begin{bmatrix}
\Phi(a_{k,1})\\
... \\
\Phi((a_{k,m})
\end{bmatrix},
\end{equation}

\begin{equation}
\label{eq:Dabove}
\textbf{D}_{\textbf{b},k} =diag
\begin{bmatrix}
1-\Phi(b_{k,1})\\
... \\
1-\Phi(b_{k,m})
\end{bmatrix}, 
\end{equation}
where $\Phi$ stands for the cumulative function of $N(0,1)$. In \cite{allik2014tobit}, ${b_{k,i}}$ and ${a_{k,i}}$ are calculated as (we denoted them with $ * $ to not confuse them with the proposed)
\begin{eqnarray}
{b^*_{k,i}} &=& \frac{b_{i}- m_{k,i}}{\sqrt{r_{(i,i),k}}} \\
{a^*_{k,i}} &=& \frac{a_{i}- m_{k,i}}{\sqrt{r_{(i,i),k}}} 
\end{eqnarray}
where $\textbf{R}_k=(r_{(i,j),k})$, $\textbf{m}_k = \textbf{H}\hat{\textbf{x}}^-_k $ and $ \textbf{S}_k=\textbf{H}\textbf{P}^-_{k}\textbf{H} + \textbf{R}_k = (s_{(i,j),k})$. We notice that the information from the Kalman residual, $ (\textbf{y}^*_k - \textbf{m}_k) $,  is omitted. In our case (see Appendix C) these amounts are as follows:
\begin{eqnarray}
{b_{k,i}} &=& \frac{b_{i}- m_{k,i}}{\sqrt {s_{(i,i),k}}} \label{eqn:Tu}\\
{a_{k,i}} &=& \frac{a_{i}- m_{k,i}}{\sqrt {s_{(i,i),k}}} \label{eqn:Tl}
\end{eqnarray}
In (\ref{eqn:Tu}) and (\ref{eqn:Tl}), as opposed to \cite{allik2014tobit}, we have incorporated in the denominator the term $(\textbf{H}\textbf{P}^-_{k}\textbf{H})_{i,i} $, which consequently, adds information into (\ref{eqn:Tu}) and (\ref{eqn:Tl}), concerning the Kalman residual. By doing so, the probability of a measurement to belong to the uncensored region is estimated more accurately.

The mean vector of the censored measurement $ \textbf{y}_k $ given the previous censored measurement $\textbf{y}_{k-1} $, can be written (in matrix notation) using (\ref{cen_mean}) as:
\begin{equation}
\begin{split}
&\E(\textbf{y}_k|\textbf{y}_{k-1}) =\textbf{m}_k\!\cdot\!\textbf{D}_{un,k} +\textbf{S}_k
\cdot diag(f_i(a_i)-f_i(b_i))_{i=1,..,m}+ \textbf{a}\!\cdot\!\textbf{D}_{\textbf{a},k} + \textbf{b}\!\cdot\!\textbf{D}_{\textbf{b},k}.  
\end{split}
\label{tkf_mean}
\end{equation}
The covariance matrix, $ \textbf{R}_{k,2} $, of the censored measurement, $ \textbf{y}_k $, given the last censored measurement, $ \textbf{y}_{k-1} $, is calculated via Proposition \ref{Censored}. In particular, the diagonal elements, $ Var(y_{(i,i),k}|\textbf{y}_{k-1}) $, of  $ \textbf{R}_{k,2} $ are calculated as $ Var(y_i) $ (\ref{cen_var}), where the mean vector, $  \boldsymbol{\mu} $, and covariance matrix, $  \boldsymbol{\Sigma} $, in our proposed model are equal with  $ \textbf{H}\hat{\textbf{x}}^-_k $ and $ \textbf{H}\textbf{P}^-_{k}\textbf{H} + \textbf{R}_k $, respectively, and the probabilities $P_{un,i} , P_{a,i}, P_{b,i} $ for $ i=1,...,m $ are given in $ (\ref{eq:Dunk})-(\ref{eq:Dabove}) $. In the same way, the off-diagonal elements, $ E(y_{k,i}y_{k,j}|\textbf{y}_{k}) -  E(y_{k,i}|\textbf{y}_{k})E(y_{k,j}|\textbf{y}_{k})$, of  $ \textbf{R}_{k,2} $ are calculated as $ E(y_{i}y_{j}) $ (\ref{cen_joint}). 

In what follows we denote by TKF$^c$ the filter described through (\ref{eq:first})-(\ref{eq:final}) and by TKF the filter described in \cite{allik2014tobit}, \cite{allik2014estimation}. In \cite{allik2014estimation},  the covariance matrix, $\textbf{R}^*_{k,2} $, of the censored measurement $ \textbf{y}_k $, is given by
\begin{equation}
\label{varbeth}
\textbf{R}^*_{k,2}=\textbf{D}_{un,k}\textbf{H}\textbf{P}_k^-\textbf{H}^T\textbf{D}_{un,k}+\textbf{R}_k^*,
\end{equation}
where  $\textbf{R}_k^*$ is a diagonal matrix, where its entries  are the truncated variances of $ y^*_{k,i} $ for $ i=1,...,m. $ (\ref{eq:Tobit}).

The main difference between (\ref{Cen_Cov}) and (\ref{varbeth}) is that in (\ref{varbeth}) the limits $ a_i $ and $ b_i $ appear only in the matrices  $\textbf{D}_{un,k}, \textbf{D}_{\textbf{a},k}$ and $ \textbf{D}_{\textbf{b},k}$. We notice that if $ a_i=0 $ and $ b_i$ is big enough (that is, only non-negative measurements are considered), then (\ref{varbeth}) provides a satisfactory approximation of the covariance matrix of the censored measurements. In order to clarify the notation and illustrate the difference between $ \textbf{R}_{k,2} $ and $ \textbf{R}^*_{k,2} $, we provide an illustrative example as follows: we examine the censored  covariance matrix for the random multidimensional $ \textbf{Y}^* \sim N(\textbf{m}_k, \textbf{S}_k) $ with censoring limits $ \textbf{a}=(-1, -3, 1)^t $ and $ \textbf{b}=(1, 7, 4)^t $. We define the mean vector, $ \textbf{m}_k $, and the covariance matrix $ \textbf{S}_k $ to be equal with,
\[\textbf{m}_k = \textbf{H}\cdot(2, 2, 3)^T,\] 
and 
\[
\textbf{S}_k=\textbf{H}
\begin{bmatrix}
4&3&4 \\
3&4&4  \\
4&4&4
\end{bmatrix}\textbf{H}^T + \textbf{R}_k,
\]
while, without loss of generality, we define  $ \textbf{H} $ and $ \textbf{R}_k $ to be equal with the $ 3\times3 $ identity matrix. Then, we proceed as follows: 1) we produce $ 10^5 $ random measurements from $ N(\textbf{m}_k, \textbf{S}_k) $ 100 times. 2) Each time, we calculate the sampling covariance matrix derived from the censored measurements. 3) We calculate the arithmetic mean, $ \textbf{R}_{s} $, of the 100 sampling covariance matrices. 4) The covariance matrices $ \textbf{R}_{k,2} $ and $ \textbf{R}^*_{k,2} $ are calculated by (\ref{Cen_Cov}) and (\ref{varbeth}), respectively. As it can been by (\ref{proposed})-(\ref{oldversion}), the proposed covariance matrix, $ \textbf{R}_{k,2} $, is almost identical with the sampling covariance matrix, $ \textbf{R}_{s} $. 

\begin{equation}
\textbf{R}_s=
\begin{bmatrix}
0.4648  &  0.6962  &  0.5083\\
0.6962  &  4.7754  &  1.9195\\
0.5083  &  1.9195  &  1.4384
\end{bmatrix},
\label{proposed}
\end{equation}

\begin{equation}
\textbf{R}_{k,2}=
\begin{bmatrix}
0.4651  &  0.6962  &  0.5085\\
0.6962  &  4.7747  &  1.9189\\
0.5085  &  1.9189  &  1.4379
\end{bmatrix},
\end{equation}

\begin{equation}
\textbf{R}^*_{k,2}=
\begin{bmatrix}
0.2724  &  0.4719  &  0.5151 \\
0.4719  &  5.0000  &  3.2744 \\
0.5151  &  3.2744  &  3.2002
\end{bmatrix}.
\label{oldversion}
\end{equation}

The marginal probability function, $ f(y_{k,i}|\textbf{y}_{k-1}) $, of the $ i^{th} $ component of the censored measurement $ \textbf{y}_{k} $ given the last measurement, $ \textbf{y}_{k-1}, $ is,
\begin{equation}
\begin{split}
f(y_{k,i}|\textbf{y}_{k-1})&=\frac{1}{\sqrt {s_{(i,i),k}}}\phi\bigg(\frac{y_{k,i}-m_{k,i}}{\sqrt {s_{(i,i),k}}}\bigg)u(y_{k,i}-a_i)u(b_i-y_{k,i})\\&+\Phi\bigg(\frac{b_i-m_{k,i}}{\sqrt {s_{(i,i),k}}}\bigg)\delta(a_i-y_{k,i}) \\
&+\Bigg(1-\Phi\bigg(\frac{b_i-m_{k,i}}{\sqrt {s_{(i,i),k}}}\bigg)\Bigg)\delta(b_i-y_{k,i}),
\end{split}
\label{eq:pdf}
\end{equation}
where $ \phi $ and $ \Phi $ are the probability and the cumulative distribution functions of the standard normal distribution, respectively,  $\delta$ is the Kronecker delta function and $ u $ stands for the Heavyside function, where $ u(x)=1 $, when $ x > 0 $ and $ u(x)=0 $, otherwise. 

The next step in our procedure is to calculate the likelihood function by taking into consideration the censored data distribution. The likelihood function for the censored measurements $ (y_{k,i})_{k=1}^{n} $ by (\ref{eq:pdf}), (\ref{eq:Dbelow}) and (\ref{eq:Dabove}) can be calculated as:
\begin{equation}
\begin{split}
L_i(y_{1,i},...,y_{n,i})&={\displaystyle\prod_{y_{k,i}=a_i} \Phi(a_{k,i})}\times {\displaystyle\prod_{y_{k,i}=b_i}(1-\Phi(b_{k,i}))}\\&\times \hspace{-0.5cm} {\displaystyle\prod_{ a_i <~ y_{k,i} <~ b_i} \frac{1}{\sqrt {s_{(i,i),k}}} \phi \Bigg(\frac{y_{k,i}-m_{k,i}}{\sqrt {s_{(i,i),k}} } \Bigg)  },
\label{mleonedimtob}
\end{split}
\end{equation}
In the case that the components of $ \textbf{y}_k $ are mutually independent, the likelihood function of the censored measurements $(\textbf{y}_k)_{k=1}^{n} $ takes the form:
\begin{equation}
L(\textbf{y}_1,...,\textbf{y}_n)= {\displaystyle\prod_{i=1}^{m} L_i(y_{1,i},...,y_{n,i}) }.
\label{eq:mletob}
\end{equation}

In the case of \cite{allik2014tobit}, the likelihood function becomes
\begin{equation}
\begin{split}
L^*_i(y_{1,i},...,y_{n,i})&={\displaystyle\prod_{y_{k,i}=a_i} \Phi(a^*_{k,i})}\times {\displaystyle\prod_{y_{k,i}=b_i}\Big(1-\Phi(b^*_{k,i})\Big)}\\& \times {\displaystyle\prod_{ a_i <~ y_{k,i} <~b_i} \frac{1}{\sqrt{r_{(i,i),k}}} \phi \Bigg(\frac{y_{k,i}-m_{k,i}}{\sqrt{r_{(i,i),k}}} \Bigg)}.
\end{split}
\label{mleonedimallik}
\end{equation}
Note that the denominator does not take into account the specific distribution of the measurements.\\

\subsection{Adaptive Tobit Kalman Filter used to Human Skeleton Tracking}

In what follows, we use the Microsoft Kinect V2 sensor to record 3D point sequences (human skeletons) of a human in motion \cite{kar2010skeletal}. In human skeleton tracking, the body is represented by a number of joints (25 in total), corresponding to different body parts such as head, neck, shoulders, etc (see Fig. \ref{fig:skeleton} \cite{kinect2016}). Each joint is represented by the vector of its Euclidean 3D space coordinates $ [z_1, z_2, z_3] $ and our aim is to denoise the measurements  for every joint in order to improve the representation of human movements. Thus, we denoise each one of the joints' coordinates separately; the input is the vector of the joints' coordinates, $ \textbf{y}^*_k=[y^*_{k,1},y^*_{k,2},y^*_{k,3}] $  (latent measurement), and the output is the vector of the denoised states coordinates, $ \textbf{x}_k=[x_{k,1},x_{k,2},x_{k,3}] $.

\begin{figure}[ht]
	\centering
	\includegraphics[width= 3in]{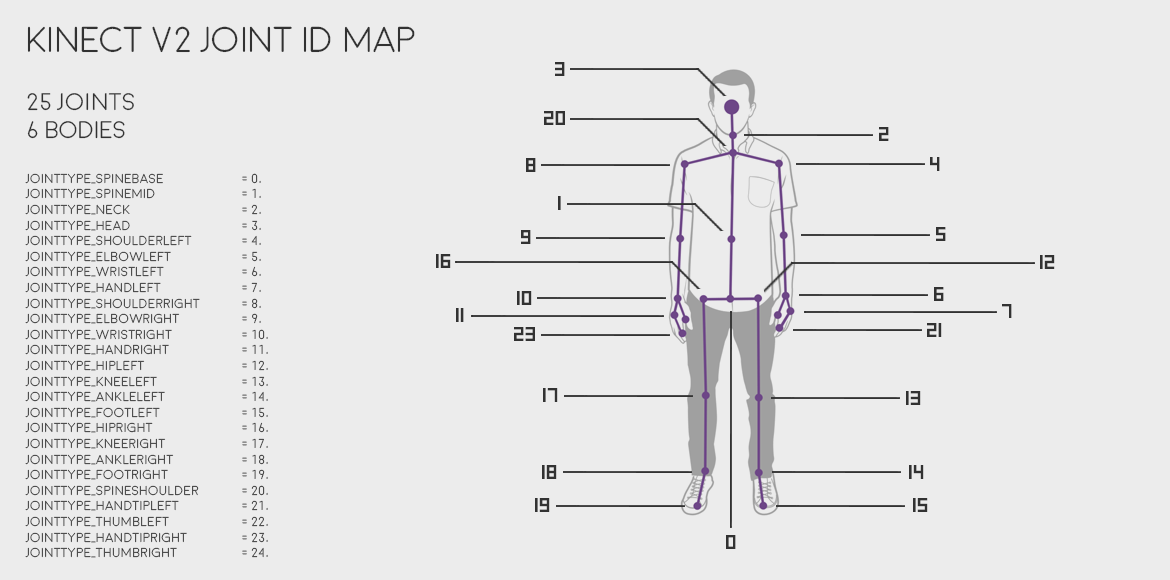}
	\caption{Human skeleton's joints map of the Kinect V2 sensor.}
	\label{fig:skeleton}
\end{figure}

To start tracking, we define the initial observation and the transition matrices to be equal to the identity matrix. Therefore, we define the covariance matrix of the noise measurement, $ \textbf{R},$ to be
\begin{equation}
\textbf{R}=0.01
\begin{bmatrix}
1&0&0 \\
0&1&0  \\
0&0&1
\end{bmatrix}.
\label{eq:Rmat}
\end{equation}
We chose to initialize $ \textbf{R} $ in that way, under the assumption that Kinect exhibits significant errors in human skeleton tracking. To support our claim, we conduct small scale experiments proving that even if a person is at rest and in front of the Kinect, the error in the displacement estimation between measurement and ground truth data is almost 0.02 meters \cite{mobini2014accuracy}, \cite{galna2014accuracy}, thus a variance of 0.01 m$^2 $ seems to be a valid choice.

In KF \cite{berti2012kalman}, \cite{edwards2014low}, no restrictions in joints' movements have been taken into account, as opposed to the proposed method. To that end, in our experiments we have used, beyond the Kinect V2 sensor, the state-of-the-art Vicon tracking system as a ground truth reference. In Vicon data, for various recordings, we observe that the velocity of the spatial coordinates $ z_1 $ and $ z_3 $ did not exceed 34 cm per frame, for every joint, and the  $ z_2 $ coordinate did not exceed 18 cm per frame. In what follows we will use these restrictions in order to correct the data produced by the Kinect sensor. By applying  these restrictions we constructed ATKF  with limits  $ \textbf{l}^1_k$  and   $  \textbf{l}^2_k$  for the vector of the spatial coordinates, $ [y^*_{k,1}, y^*_{k,2}, y^*_{k,3}] $, as follows:
\begin{equation}
\textbf{l}^2_k=\textbf{H}\hat{\textbf{x}}_{k-1} + \textbf{c},
\label{eq:ab_limit}
\end{equation}
and
\begin{equation}
\textbf{l}^1_k=\textbf{H}\hat{\textbf{x}}_{k-1} - \textbf{c},
\label{eq:bl_limit}
\end{equation}
where the observation matrix, $\textbf{H}$, is the identity matrix for smoothing approaches,$ \textbf{l}^1_k$  and   $  \textbf{l}^2_k$ are the limits of ATKF at time $ k $, which depend on the previous estimation of spatial coordinates, $ \hat{\textbf{x}}_{k-1} $, and the vector $ \textbf{c}$, which for human skeleton tracking is experimentally found to be
\begin{equation}
\textbf{c}=(0.34,0.18,0.34).
\end{equation}
Thus, for the latent measurement  $ \textbf{y}^*_k=[y^*_{k,1},y^*_{k,2},y^*_{k,3}] $ at time $ k $ we get
\begin{equation}
y_{k,i}=
\begin{cases} 
y^*_{k,i}, & l^1_{k,i}<y^*_{k,i}<l^2_{k,i}\\
l^1_{k,i}, & y^*_{k,i} \leq l^1_{k,i}\\
l^2_{k,i}, & y^*_{k,i} \geq l^2_{k,i}.
\end{cases}
i=1,2,3.
\end{equation}

This model corrects Kinect measurements, when they have high abnormal velocity. It should be noted that, if $l^1_{k,i}\to-\infty$ and $l^2_{k,i} \to\infty$ (i.e, the range of ATKF limits becomes very large), ATKF tends to the standard KF, because in this case the Kinect measurements belong to the uncensored region and consequently they are known. Due to this fact, we expect in some recordings, which do not include big or fast joints' movements (thus, the Kinect measurements always  belong to the uncensored region), to get almost the same results concerning RMSE for KF as well as for ATKF.

In order to create a general model for smoothing Kinect V2 measurements without having to estimate the matrix $ \textbf{Q}_k $ for every time-window (because this is time consuming), we assume that this matrix is constant. Substituting for $ \textbf{R} $ in the likelihood function (\ref{mleonedimtob}), the covariance matrix of the noise process, $ \textbf{Q} $, can be estimated. By experimenting on various joints' movements, it is derived that the values of $ \textbf{Q} $ are smaller than those of matrix $ \textbf{R} $ and generally they depend on the  speed of the human skeleton's joints. Regarding slow joints' movements, the entries of \textbf{Q} are smaller than $10^{-4}$ and for faster joints' movements they lie  between $ 10^{-3} $ and $ 10^{-2} $. We notice that in some cases, where the entries of $ \textbf{Q} $ appeared to be quite large (in the order of $ 10^{-2} $), the human skeleton moved too quickly in an abnormal manner due to occlusions and/or self-occlusions. Thereafter, we assume that the covariance matrix of the noise process is
\begin{equation}
\textbf{Q}=0.0025
\begin{bmatrix}
1&0&0 \\
0&1&0  \\
0&0&1
\end{bmatrix},
\label{eq:Qmat}
\end{equation}
otherwise, if we define smaller or larger values, ATKF will be either  over-smoothed or will not denoise the Kinect measurements. Therefore, the matrix  $\textbf{Q} $ given in (\ref{eq:Qmat}), seems to be an appropriate choice for smoothing the Kinect V2 sensor measurements of human skeleton tracking.\\

\section{Experiments}

In this section, we conduct three sets of experiments to evaluate TKF$^c$ and ATKF compared to other methods. We use 1) TKF and 2) TKF$^c$ in the first experimental set (oscillator), which is employed in \cite{allik2014estimation}. Next, we use 1) SGF, 2) KF, 3) TKF, 4) TKF$^{c}$ and 5) ATKF in order to smooth data for two different experimental sets: a) Real-life data captured by a Kinect sensor, b) Real-life data captured by both a Kinect sensor and a Vicon system.

\subsection {Oscillator}

In the first experimental set, we present a motivating example of tracking a sinusoidal model by a TKF and TKF$^c$, when the measurements are saturated. Let the state space equations have the form of (\ref{eq:Tobit}), with state space matrices

\begin{equation}
\textbf{A}=c
\begin{bmatrix}
cos(w)&-sin(w)\\
sin(w)&\quad cos(w)
\end{bmatrix},
\label{eq:A}
\end{equation}
and
\begin{equation}
\textbf{H}=
\begin{bmatrix}
1&0
\end{bmatrix},
\label{eq:H}
\end{equation}
where $ c=0.999 $ and $ w=0.005\cdot2\pi$. The disturbance $\textbf{w}_k $ is assumed to be normally distributed, i.e. $\textbf{w}_k\sim N(\textbf{0},\textbf{Q}) $, where\\
\begin{equation}
\textbf{Q}=
\begin{bmatrix}
0.05^2&0\\
0&0.05^2
\end{bmatrix},
\label{eq:Q}
\end{equation}
while, the measurement noise, $ v_k $, is normally distributed, $v_k\sim N(0, 0.5) $. The initial state vector is equal to $ \textbf{x}_0=[5\quad 0]^T $ with covariance matrix $ \textbf{P}_0 = \textbf{I}_{2} $, the  censored limits are  $ a = -0.5 $ and $ b = 0.5 $. Therefore, by the above example we produce censored (saturated) measurements, $y_k$, where $k=1,...,1000$.

Next, we repeat the above process 100 times and we calculate the filters' RMSEs for each iteration. The means of filters's RMSEs for 100 iterations are presented in Table \ref{tab:rmse_x}, where we provide separate RMSEs for the two  estimated coordinates of the state vector, $ \textbf{x}_k$. It can be observed that the corrected TKF$^c$ outperforms TKF in state estimation (Fig. \ref{fig:rmse_x}). This is due to the fact that in TKF some important terms are ignored when calculating $ \textbf{R}_{k,2}^* $ (\ref{varbeth}), while these terms are included in TKF$^c$ process (\ref{Cen_Cov}).   

\begin{table}[ht]
	\renewcommand{\arraystretch}{1.3}
	\begin{center}
		\begin{tabular}{ |c|c|c| }
			\hline
			\textbf{Filter}  & \textbf{Mean RMSE of} $\hat{\textbf{x}}_1 $ &  \textbf{Mean RMSE of} $\hat{\textbf{x}}_2 $ \\
			\hline
			TKF &  0.4434 & 0.5464\\
			TKF$^{c} $ & \textbf{0.4066} & \textbf{0.5192} \\
			\hline
		\end{tabular}
	\end{center}
	\caption{The mean of RMSEs for the filters TKF and TKF$^{c} $, respectively.}
	\label{tab:rmse_x}
\end{table}

\begin{figure}[ht]
	\centering
	    {
		\includegraphics[width=4in]{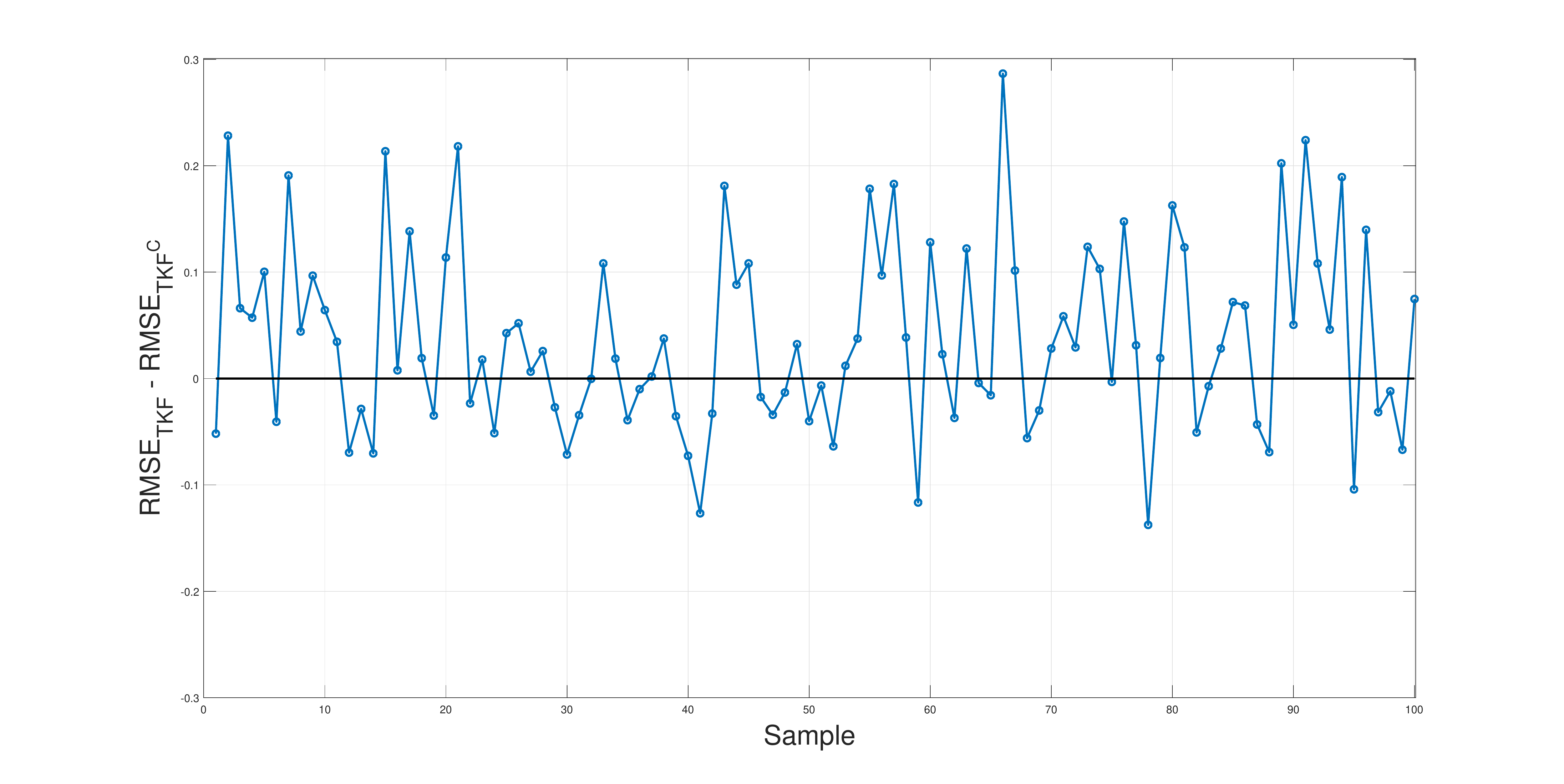}
	    }
    \hfill
	    {
		\includegraphics[width=4in]{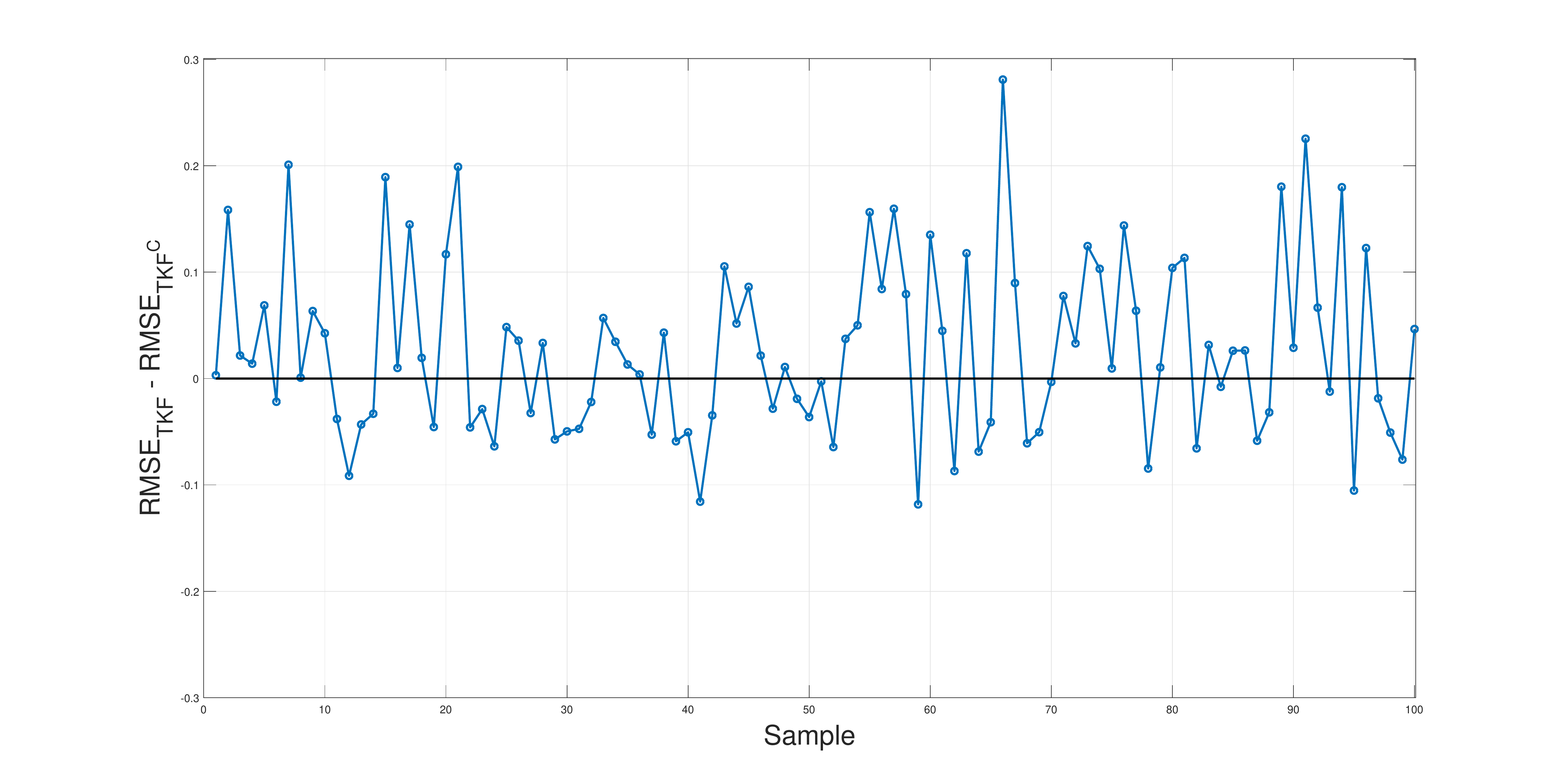}
	    }
	\caption{The difference between TKF's and TKF$^c$'s RMSE for each iteration.}
	\label{fig:rmse_x}
\end{figure}

\subsection{Recordings by the Kinect Sensor}
In the second experiment set, we record various human movements by a single Kinect V2 sensor. In some of the recordings, the human skeleton motion exhibits an important error on the $ z_2 $ axis (practically, the human skeleton seems to "fall down") for one or two frames. We apply the above mentioned filters to correct this specific error. 

In order to evaluate the performance of the different filters, we propose a novel metric $m_i$, to better examine the result of smoothing the joints' movements. Let us denote by $ g_{k,i} $  the filtering of the $ i^{th} $ component of the measurement $ y_{k,i}$ at time $ k $. Then,
\begin{equation}
\label{eq:metric}
m_i=average\Big[(dg_{k,i})^2\Big]_{k=1}^{n-1} \qquad  \qquad \quad
\end{equation}
where $ dg_{k,i}=g_{k+1,i}-g_{k,i}$ and $n$ is the number of measurements.

In the case of TKF$^c$ and TKF we use the device limits. For instance, the ranges of Kinect spatial coordinates  $ z_1, z_2 $ and $ z_3 $ (depth) are approximately $ [-3m,\,3m] $, $ [-1.5m,\,3m] $ (if the Kinect V2 sensor is located $ 1.5m $ over the ground)  and $ [0.5m,\,5m] $, respectively. Thus, we can use these limits for the Kinect measurements in order to test TKF and  TKF$^{c}$. The covariance matrices of TKF$^{c}$ and TKF for the noise measurement, $ \textbf{R}$, are defined as in ATKF (\ref{eq:Rmat}), while the covariance matrices for the noise process, $ \textbf{Q} $, can be estimated using the likelihood functions (\ref{mleonedimallik}) and (\ref{eq:mletob}), respectively. By experimenting on various joints' movements, we get that the entries of $ \textbf{Q} $ are the same as in the case of ATKF, therefore, we can use the same matrix $\textbf{Q} $ given by (\ref{eq:Qmat}). In the case of KF, the covariance matrix, $ \textbf{R}$, is defined as in ATKF (\ref{eq:Rmat}) and the covariance matrix for the noise process, $\textbf{Q}$, is estimated by the log-likelihood function given in \cite{hamilton1994time}. The results showed (in the same experiments as we mentioned before), that the entries of $\textbf{Q}$ are almost the same as in the case of ATKF, thus, the matrix $ \textbf{Q} $ is defined as in (\ref{eq:Qmat}).

In our experiments we take the overall average $ M $ of the metrics $ m_i $ for various recordings. The results showed that ATKF achieves better performance in noise reduction than the other filters (see Table\,\ref{tab:metric} ), especially in the cases where the skeleton seems to collapse, while KF, TKF$^{c} $ and TKF have almost the same overall average $ M $ and SGF has a poor performance. As can be seen in Fig. \ref{fig:cordy} for two different experiments, the head's spatial coordinates $z_2$ of the human skeleton resulted from ATKF, do not (correctly) follow the error produced by the Kinect sensor. It can be seeing (Fig. \ref{fig:cordy})  that although KF, TKF$ ^c $ and TKF improve the human skeleton motion, they provide inferior results than the ones produced by ATKF, while SGF has the worst performance among all. In the first experiment illustrated in Fig. \ref{fig:cordy}a, the ATKF skeleton followed the sharp "fall" for almost 5 cm, while KF, TKF$ ^c $ and TKF skeletons for 12 cm, and the SGF skeleton for 20 cm. The joint based average $m_i$ as opposed to the overall experiments average $M$ of ATKF in this experiment is $ 0.350*10^{-3} $, while in KF, TKF$ ^c $ and TKF is $0.409*10^{-3} $ and in SGF is $0.797*10^{-3}$. 

\begin{table}[ht]
	\renewcommand{\arraystretch}{1.3}
	\begin{center}
		\begin{tabular}{ |c|c| }
			\hline
			\textbf{Filter}  & \textbf{Overall Average $ M $} \\
			\hline
			SGF& $ 0.790*10^{-3} $\\
			KF&  $ 0.436*10^{-3} $ \\
			TKF & $ 0.433*10^{-3} $ \\
			TKF $^{c} $&  $ 0.433*10^{-3} $ \\
			ATKF&$\textbf{0.362}*\textbf{10}^{\textbf{-3}} $ \\
			\hline
			Kinect V2 & $ 1.70*10^{-3} $\\
			\hline
		\end{tabular}
	\end{center}
	\caption{The overall average $ M $ of the recordings for the Kinect V2 sensor and the filters.}
	\label{tab:metric}
\end{table}

\begin{figure}[ht]
	\centering
	{
		\includegraphics[width=4in]{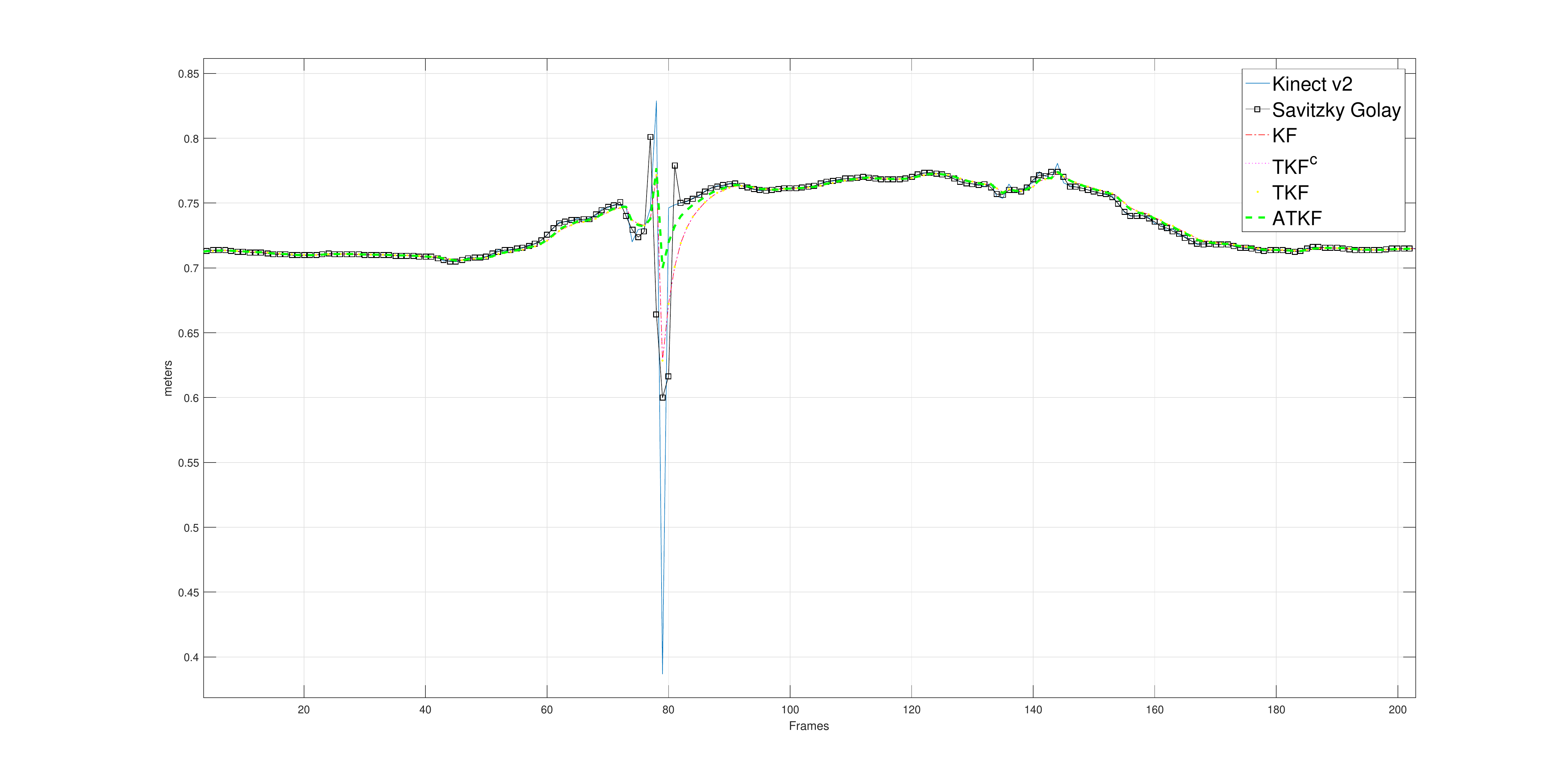}
    }
	\hfill
	{
		\includegraphics[width=4in]{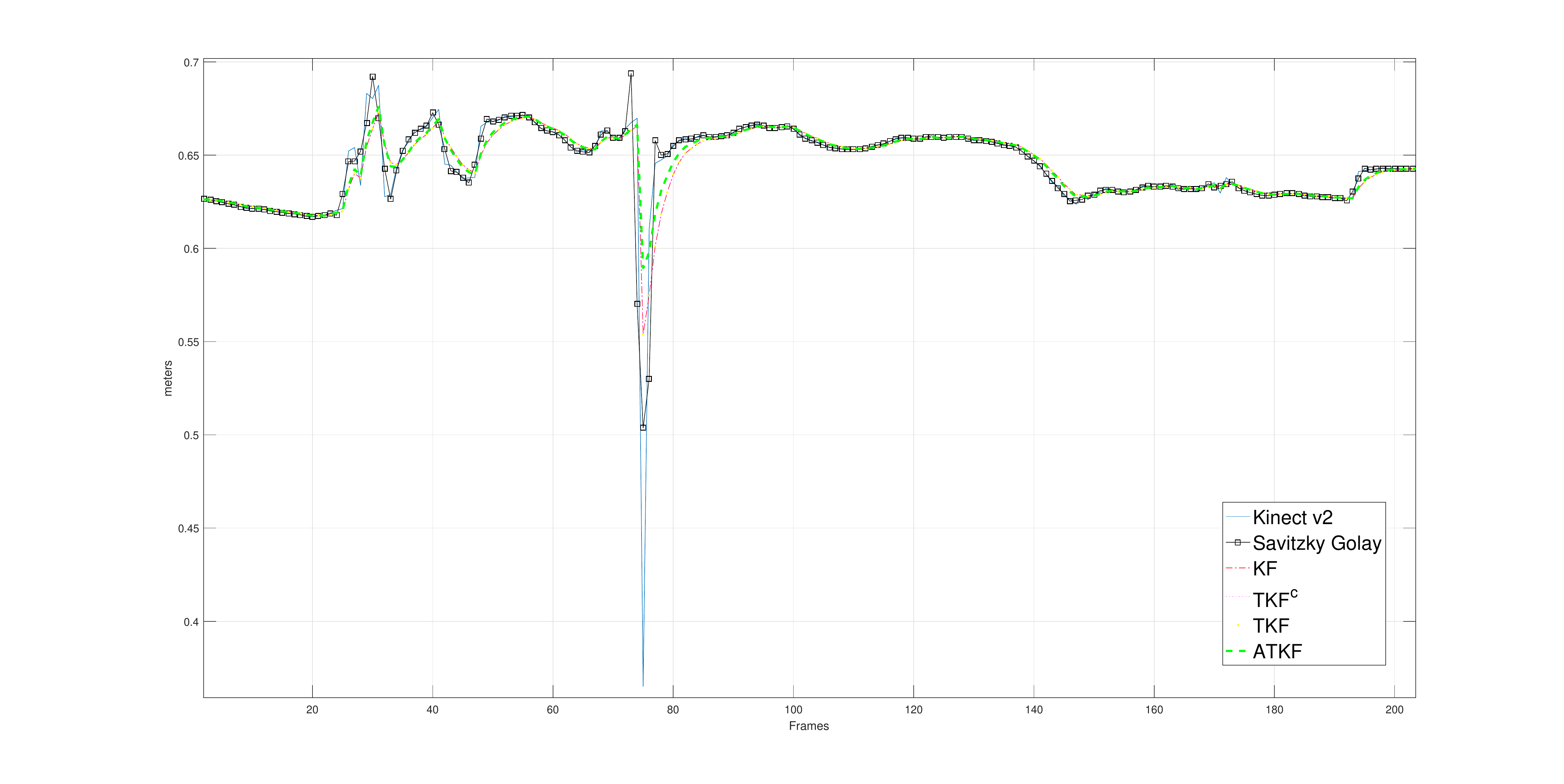}
    }
	\caption{The head's spatial coordinates $y_k$ of Kinect V2 sensor, Saviztky-Golay, KF, TKF, TKF$^c $ and ATKF.}
	\label{fig:cordy}
\end{figure}

\begin{figure}[ht]
	\centering
	{
		\includegraphics[width=4in]{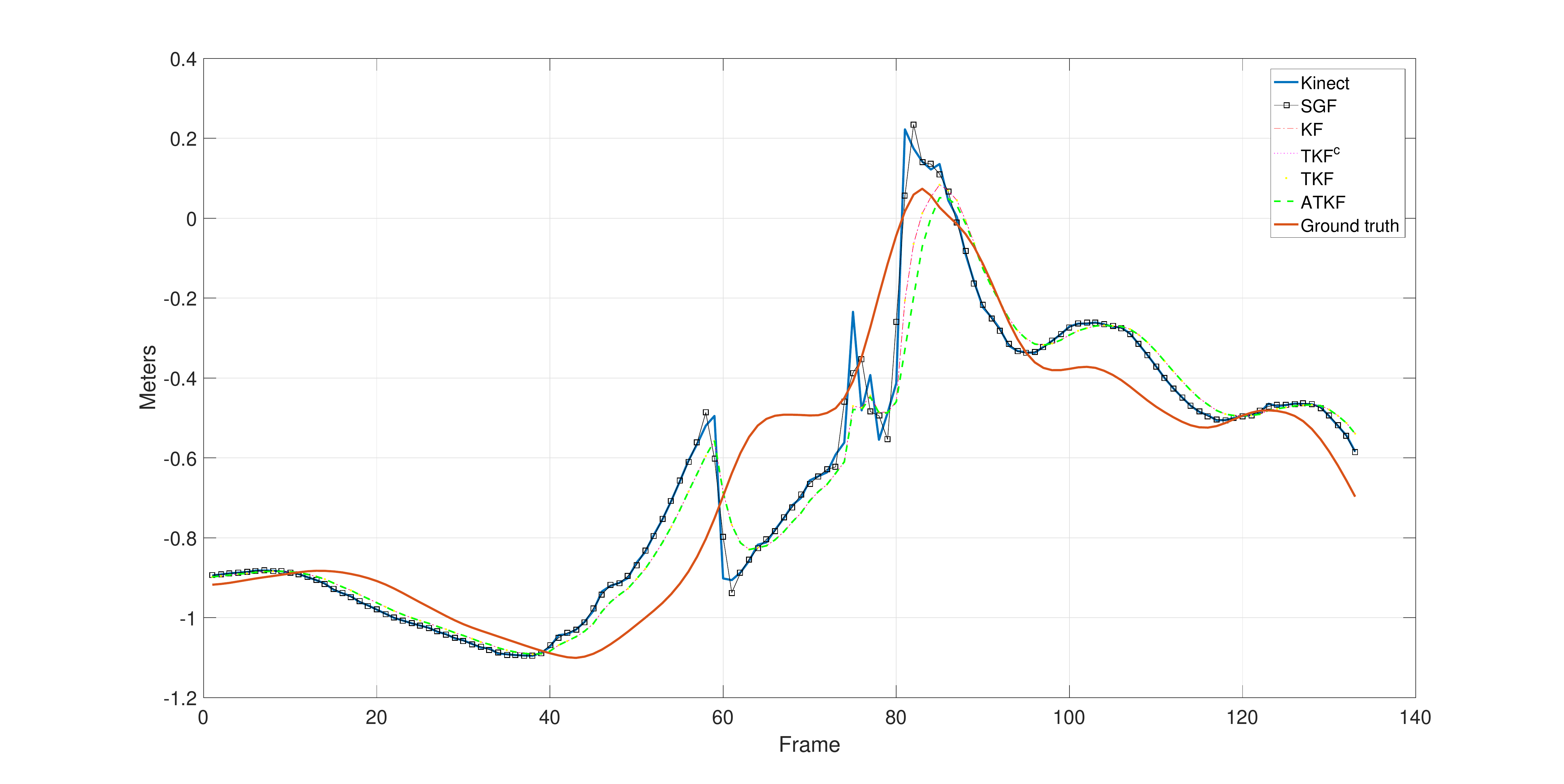} 
	}
    \hfill
    {
		\includegraphics[width=4in]{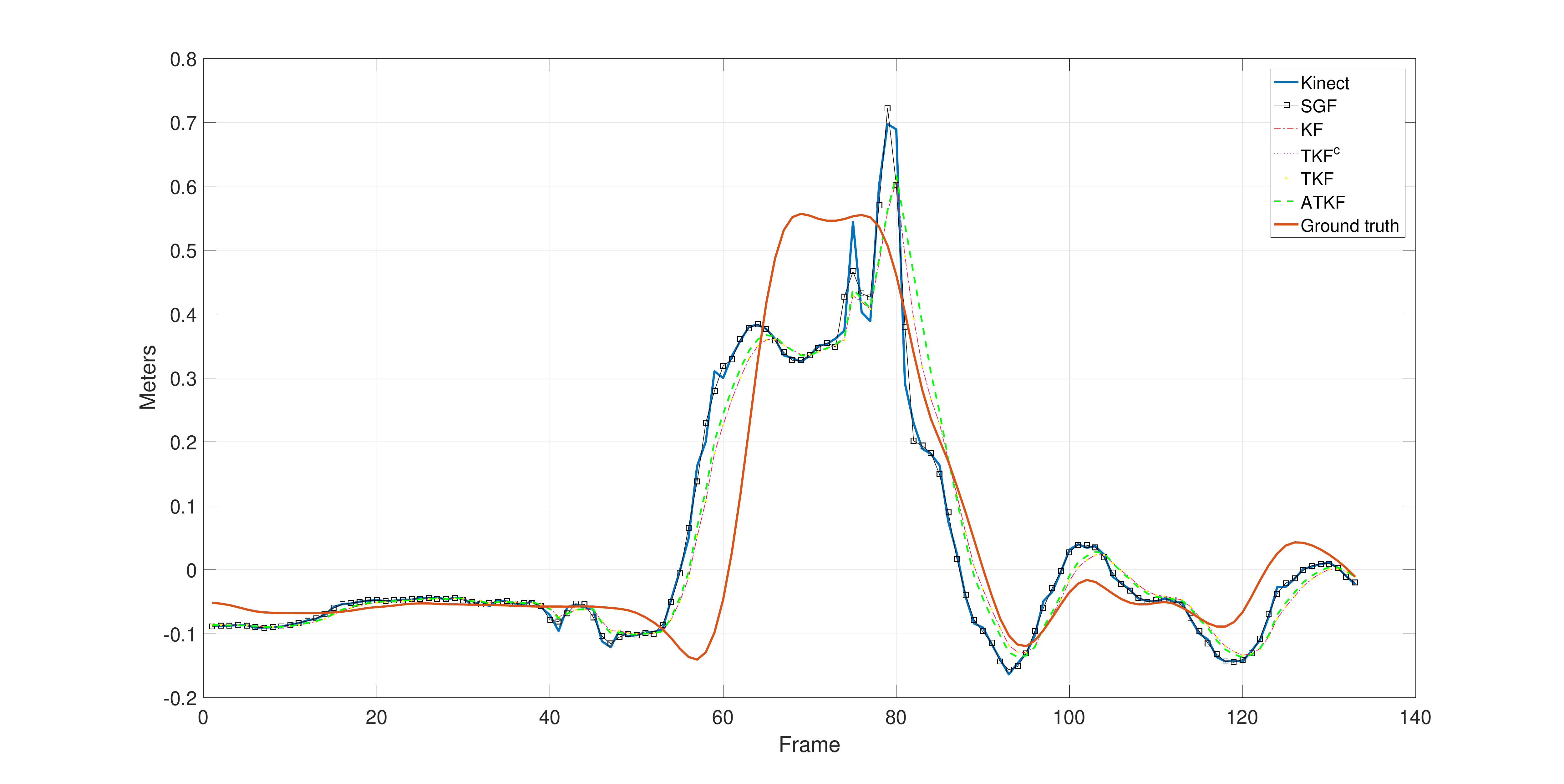} 
	}
	\hfill
	{
		\includegraphics[width=4in]{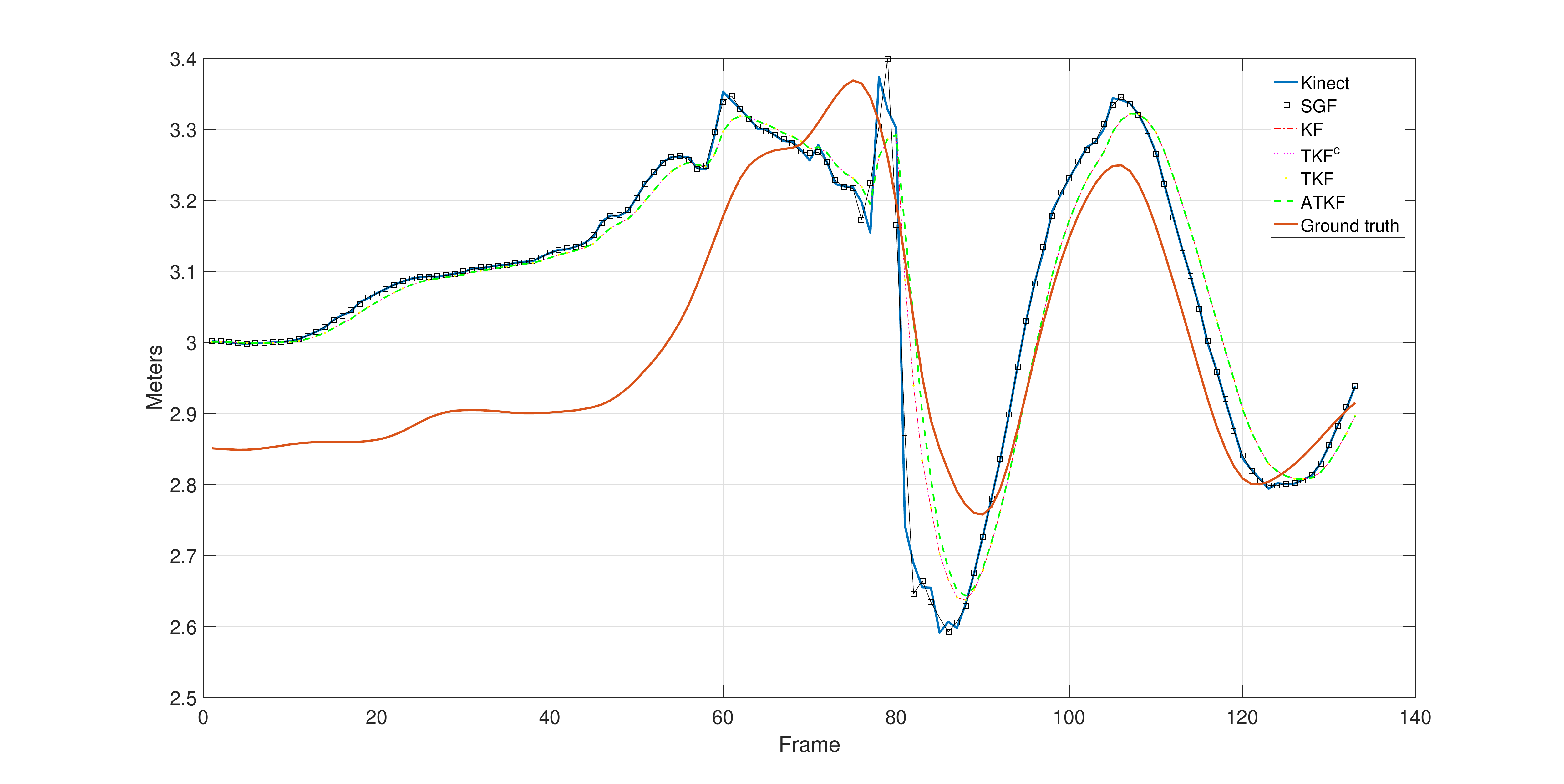} 
    }
	\caption{The right hand's coordinates by Kinect V2 sensor, SGF, KF, TKF$ ^c$, TKF, ATKF and Ground truth.}
	\label{fig:cordxyz}
\end{figure} 
To better illustrate the superiority of ATFK we illustrate the motion of the human skeleton (obtained by Kinect) under heavy occlusion in the first row of subfigures in Fig. \ref{fig:Atkf} for four consecutive frames. The first subfigure shows the human skeleton one frame before "collapsing", the next two show the human skeleton under heavy occlusion and the last one shows a better performance of  human skeleton. In the next five rows of Fig. \ref{fig:Atkf}, the motion of human skeleton is illustrated as it is resulted by SGF, KF, TKF, TKF$^c$ and ATKF, respectively. All filters had a delay of 1-2 frames due to the occluded area but ATKF clearly outperforms all other methods (see the last row in Fig. \ref{fig:Atkf})

\begin{table}[ht]
	\renewcommand{\arraystretch}{1.3}
	\begin{center}
		\begin{tabular}{ |c|c|c|c|c|c|c| }
			\hline
			\textbf{Angles}  & \textbf{Kin. v2} & \textbf{SGF} & \textbf{KF}& \textbf{TKF} & \textbf{TKF}$^c $& \textbf{ ATKF} \\
			\hline
			Right Elbow & 39.31 &37.44 & 36.60 & 36.60 & 36.60 & \textbf{36.32} \\
			Left Elbow & 31.58 & 30.65 &  27.98 & 27.98 & 27.98 & \textbf{26.50} \\
			Right Knee & 16.70 & 16.79 & 15.79 & 15.79 & 15.79 & \textbf{14.90}\\
			Left Knee & 26.25 & 25.81 & 25.14 &  25.14 & 25.14 & \textbf{25.11}\\
			\hline
		\end{tabular}
	\end{center}	
	\begin{center}
		\begin{tabular}{ |c|c|c|c|c|c|c| }
			\hline
			\textbf{Angles}  & \textbf{Kin. v2} & \textbf{SGF} & \textbf{KF}& \textbf{TKF} & \textbf{TKF}$^c $& \textbf{ ATKF} \\
			\hline
			Right Elbow & 38.76 & 36.86 & 35.90 & 35.90 & 35.90 & \textbf{35.57} \\
			Left Elbow & 32.18 & 31.27 & 28.43 & 28.43 & 28.43 &  \textbf{27.02}\\
			Right Knee & 17.03 & 17.12 & 15.75 & 15.75 & 15.75 &  \textbf{14.93}\\
			Left Knee & 26.38 &26.01 & 24.85 & 24.85 & 24.85 &  \textbf{24.82}\\
			\hline
		\end{tabular}
	\end{center}	
	\begin{center}
		\begin{tabular}{ |c|c|c|c|c|c|c| }
			\hline
			\textbf{Angles}  & \textbf{Kin. v2} & \textbf{SGF} & \textbf{KF}& \textbf{TKF} & \textbf{TKF}$^c $& \textbf{ ATKF} \\
			\hline
			Right Elbow & 38.43 & 36.63 & 35.40 & 35.40 & 35.40 & \textbf{35.06}\\
			Left Elbow & 32.99 & 32.09& 29.08 & 29.08 & 29.08 & \textbf{27.75}\\
			Right Knee & 17.77 & 17.79 & 16.04 & 16.04 & 16.04 & \textbf{15.26}\\
			Left Knee & 26.67 & 26.46 &  24.90 & 24.90 & 24.90 & \textbf{24.89}\\
			\hline
		\end{tabular}
	\end{center}		
	\begin{center}
		\begin{tabular}{ |c|c|c|c|c|c|c| }
			\hline
			\textbf{Angles}  & \textbf{Kin. v2} & \textbf{SGF} & \textbf{KF}& \textbf{TKF}& \textbf{TKF}$^c $ & \textbf{ ATKF} \\
			\hline
			Right Elbow & 38.39 &36.64& 35.25 & 35.25 &  35.25 & \textbf{ 34.93}\\
			Left Elbow & 33.96 & 33.06  & 29.92 & 29.92 & 29.92 & \textbf{28.70}\\
			Right Knee & 18.78& 18.78 & 16.58 & 16.58 & 16.58 & \textbf{15.77}\\
			Left Knee & 27.14& 27.02 & 25.24 & 25.24 & 25.24 & \textbf{25.23}\\
			\hline
		\end{tabular}\\
	\end{center}
	\caption{RMSEs for the angles by Kinect V2, SGF, KF, TKF, TKF$^c$ and ATKF for time delay 92, 93, 94 and 95.}
	\label{tab:RMSE}
\end{table}

\subsection{Recording by Kinect Sensor and Vicon System}
In this subsection, we evaluate the proposed method with respect to ground truth data. To that end, we monitor an athlete throwing a ball with his right hand, and we record this motion by a Kinect V2 sensor and the Vicon system at the same time. We use Vicon as the ground truth in order to compare results using  the proposed method on Kinect measurements. The number of Kinect's and Vicon's frames are 266 (almost 8.8667 sec.) and 139 (4.4480 sec.), respectively. We note that Kinect time-stamp  is almost 0.033 sec per frame while Vicon time-stamp is constantly 0.032sec. We interpolate  Vicon data in order to deal with the time-stamp problem; after interpolation, the new Vicon data include 133 frames. Therefore, we temporally synchronize the two sensors to start  together. To do so, we initially calculate the angles of knees and elbows obtained by Kinect and Vicon data and then, we calculate the RMSE between these angles for different delays. The results show that the minimum values of RMSE for every angle appeared for delays of 92-95 frames. The different delays between the angles in some cases are somewhat expected because Kinect records fast movements with delay (i.e., after some frames).

We notice that KF smooths the spatial coordinates without affecting the movement (see Fig. \ref{fig:cordxyz}). TKF$^{c}$ and TKF perform exactly the same smoothing in all joints as KF, while SGF does not perform a satisfactory smoothing in some points where the measurements have a significant error. In  Table \ref{tab:RMSE} we observe the RMSEs for the angles as they arise for delays $ t=92,93,94,95 $ frames, respectively. In all cases, the RMSEs are big enough because of the occlusion of some joints during the recording.

In  Fig.\ref{fig:cordxyz}  the right hand's coordinates resulted by KF, TKF, TKF$^c $ and ATKF are almost the same, because all measurements belong to the uncensored region, while SGF coordinates are almost the same with Kinect's coordinates. However, as can be seen in  Table \ref{tab:RMSE} , in all cases concerning RMSEs, we get better results via ATKF compared to those of standard KF,  TKF$^c $ and TKF. The RMSEs of SGF  are almost the same as the Kinect RMSEs.

\section{Conclusion and Discussion}

The aim of this paper was to improve 1) the well-known TKF process \cite{allik2014tobit} and 2) the human skeleton motion tracking using a single Kinect V2 sensor, which often generates noisy measurements due to occlusion, lighting conditions, etc. To that end, we proposed a novel filtering method, called ATKF, which relies on the censored data statistics theory for human skeleton motion tracking in real-time. In order to estimate the hidden state vector by the censored measurement, firstly, we evaluated the probabilities of a latent  measurement to belong in or out of the uncensored region (Appendix C) and secondly, we evaluated the accurate covariance matrix of  the censored normal distribution (Appendix B). In this approach, we had to define the limits of the uncensored region for the Kinect's measurements, in a reasonable manner for every time step $ k $. To do so, we tested many data with various joints movements, which were  obtained by ground truth sensor, such as the Vicon tracking system. 

We evaluated the proposed method against 1) standard KF, 2) TKF, 3) TKF$^c$ with constant limits and 4) SGF in three different setups: 1) Artificial data 2) Kinect and 3) Kinect plus Vicon human skeleton motion data. We also introduced a new metric in order to evaluate results when no ground truth is available. Finally, we calculated the covariance matrix, $ \textbf{Q}$, of the noise process under a specific experimental methodology as opposed to previous methods where random or simple experimental covariance matrices were used. Among the five approaches, ATKF gave better results in all the different setups for human skeleton tracking. 

In a future work it would be interesting to use the proposed filtering method for action recognition tasks in the wild, where uncontrolled environments and situations where RGB-D sensors may have poor performance often occur. Moreover, as a step beyond, it would be interesting to consider the state vector $\mathbf{x}$ as a censored state, aiming at achieving a more accurate filtering of the human skeleton motion data. 

\clearpage 
\begin{figure*}[h!] 
	\centering
	\begin{tabular}{ |l| c c c c| }	
		\hline	
		\textbf{1. Kinect} & \includegraphics[width=1.2in]{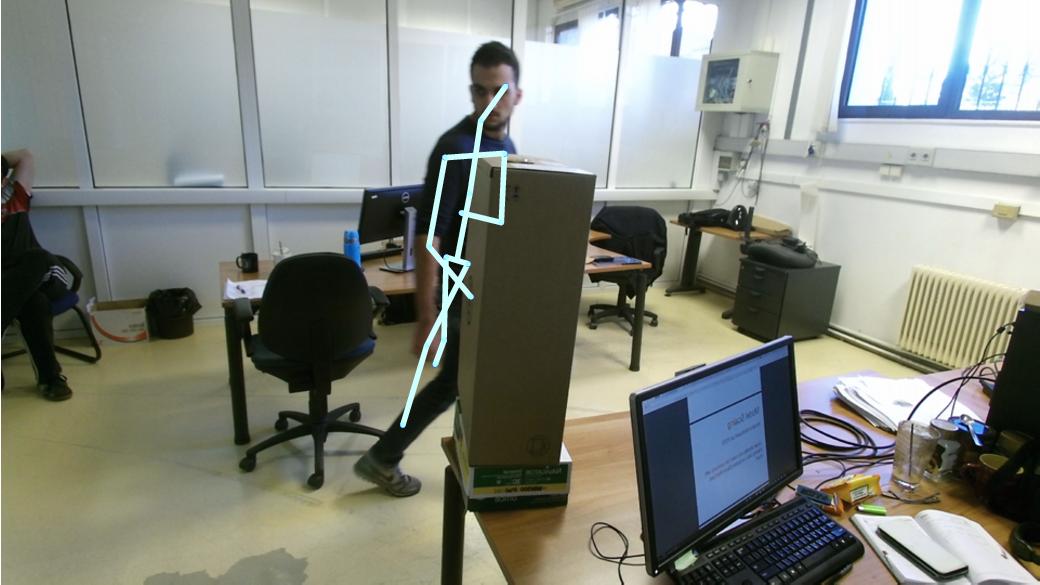}& \includegraphics[width=1.2in]{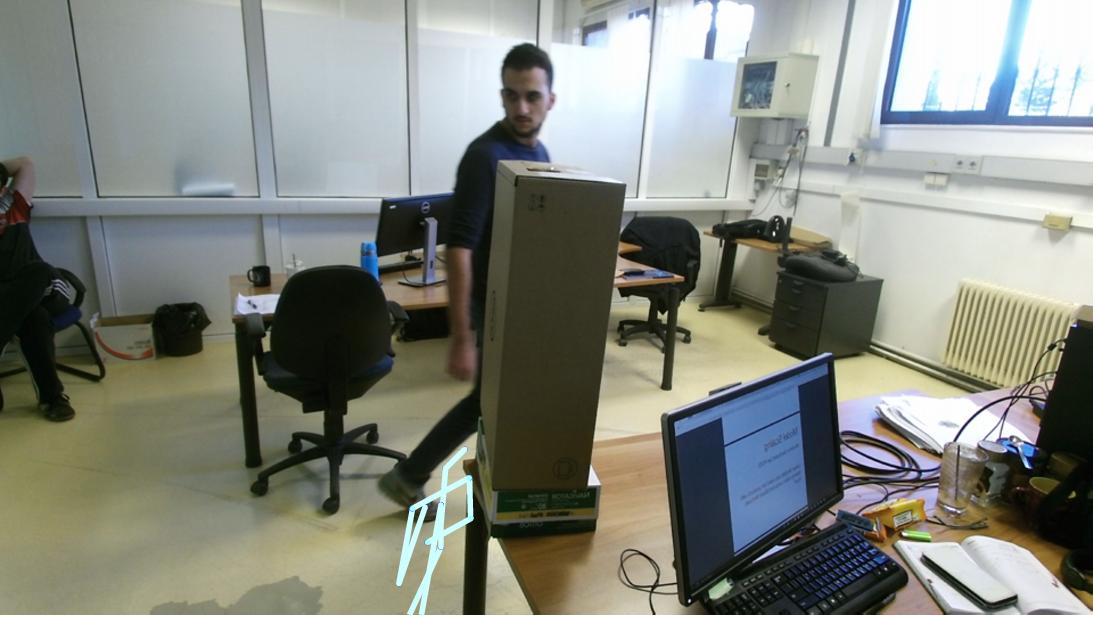} &
		\includegraphics[width=1.2in]{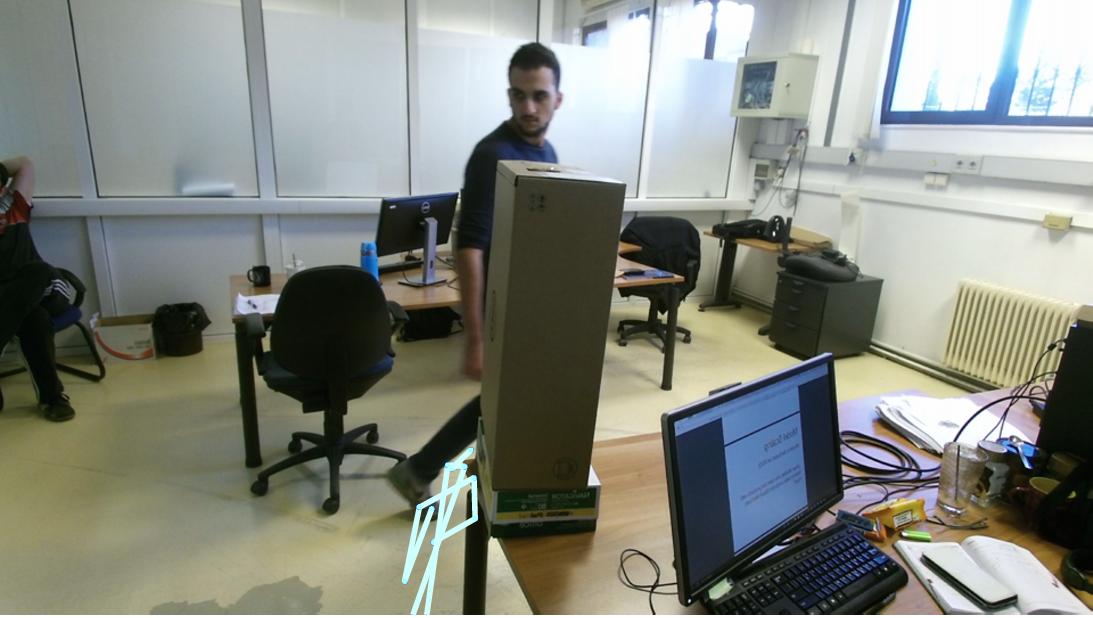} & \includegraphics[width=1.2in]{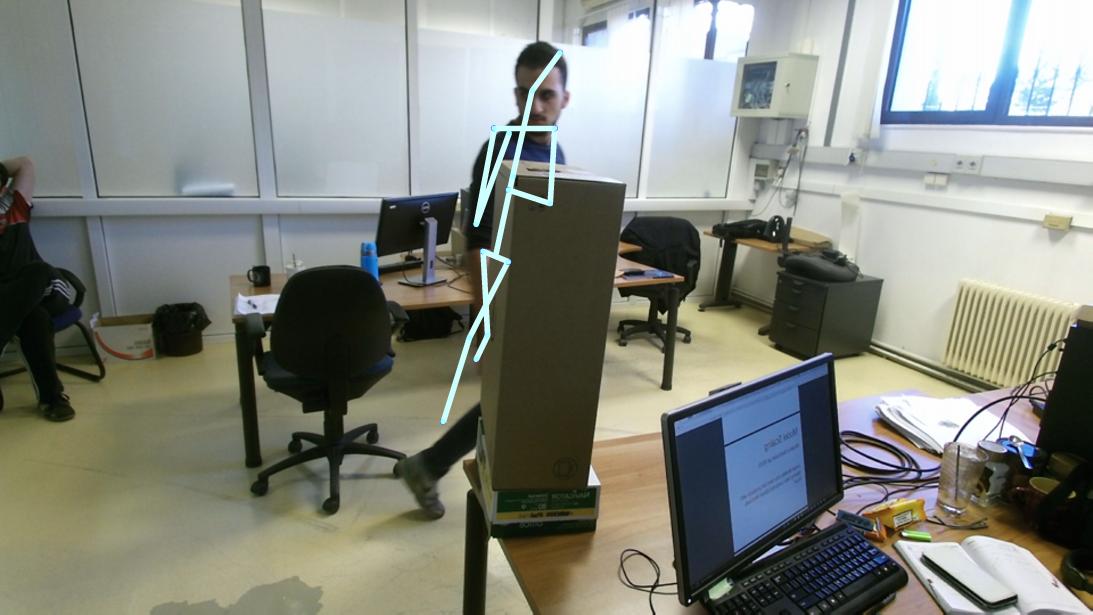} \\
		\hline	
		\textbf{2. SGF} & \includegraphics[width=1.2in]{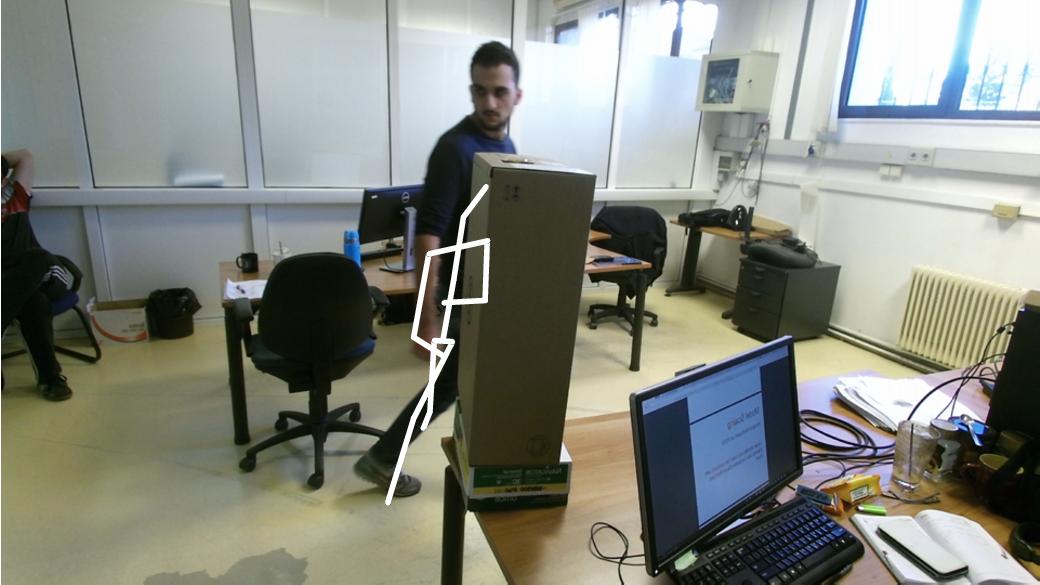}& \includegraphics[width=1.2in]{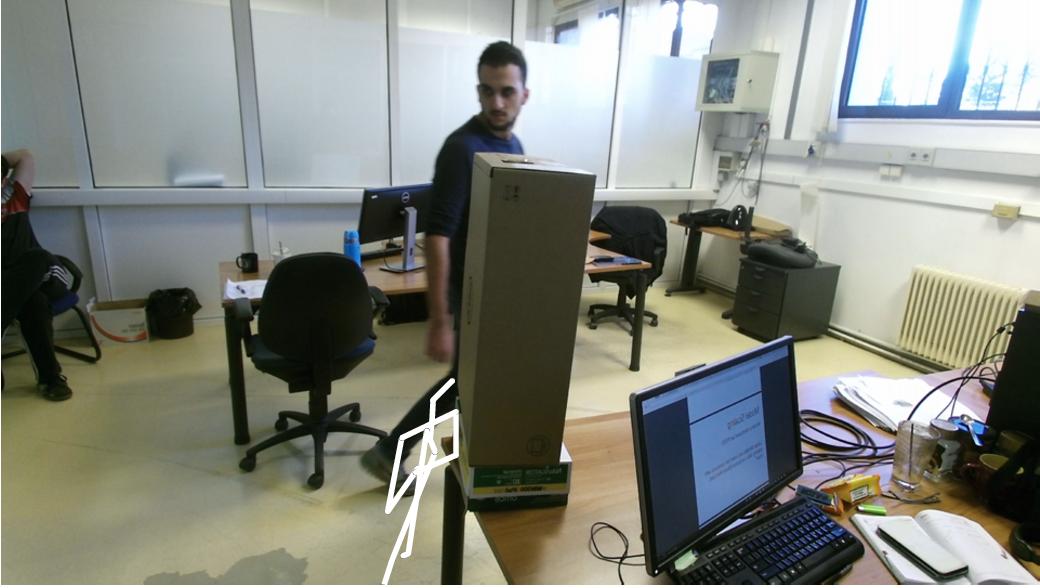} & \includegraphics[width=1.2in]{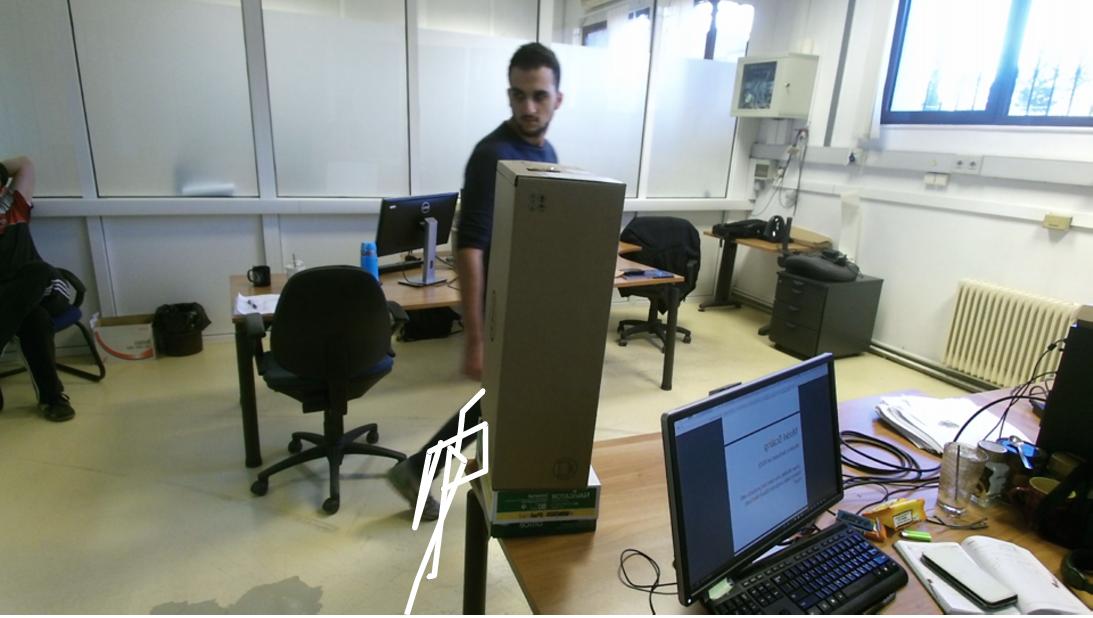} & \includegraphics[width=1.2in]{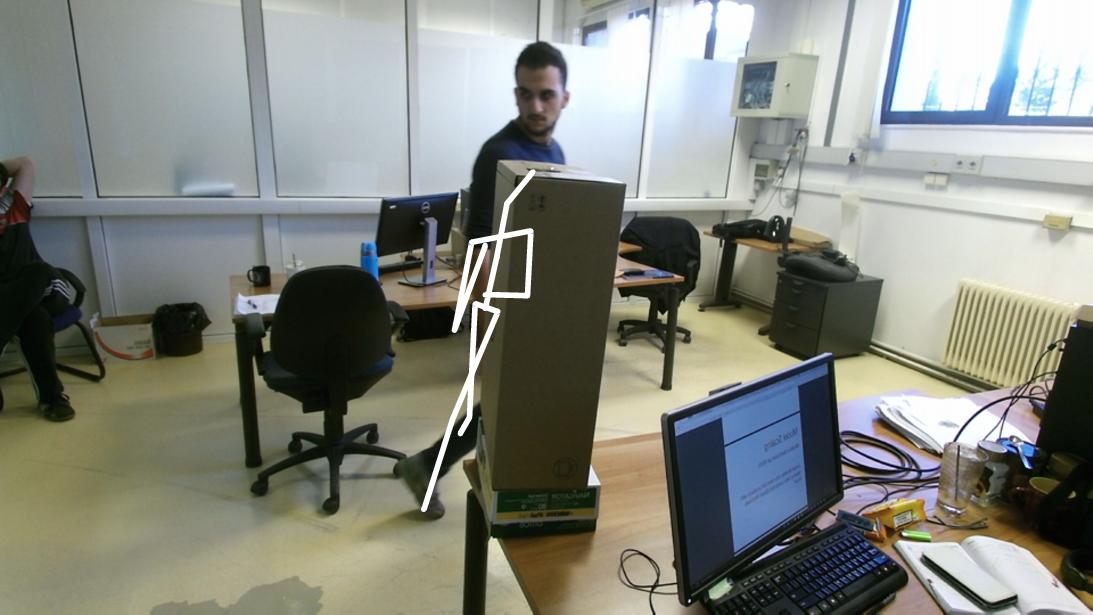} \\
		\hline	
		\textbf{3. KF} & \includegraphics[width=1.2in]{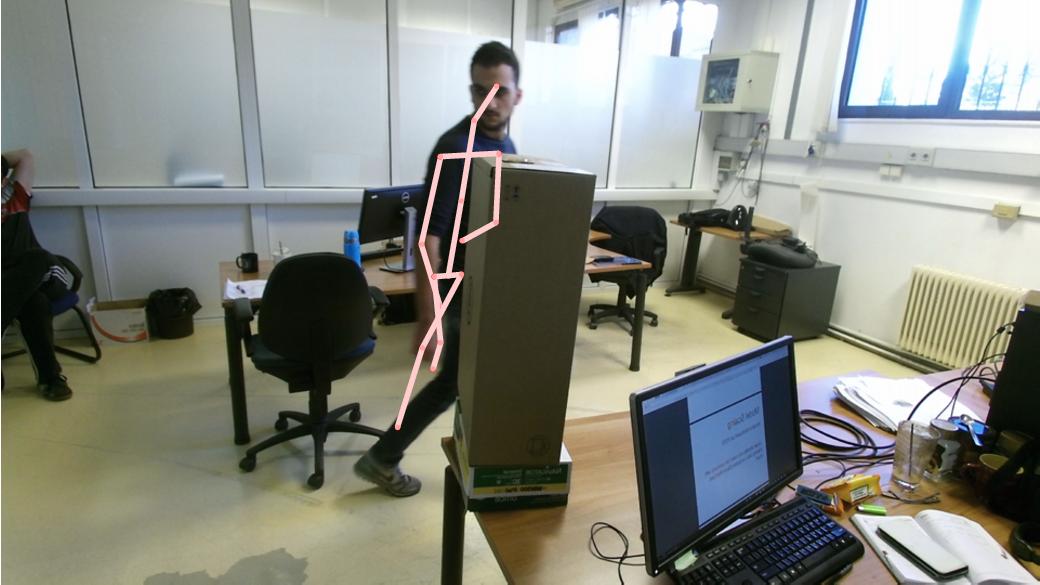}& \includegraphics[width=1.2in]{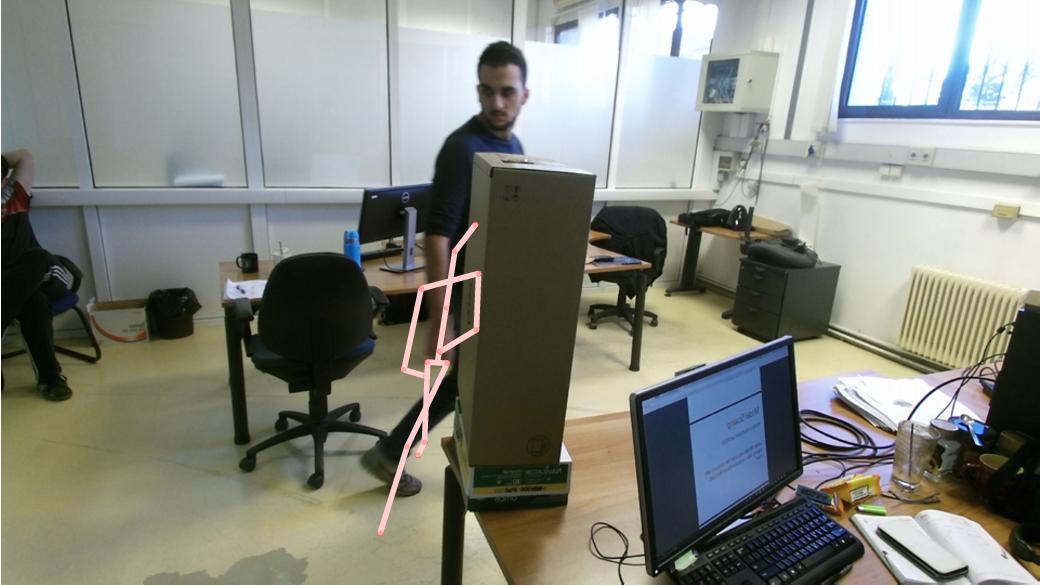} & \includegraphics[width=1.2in]{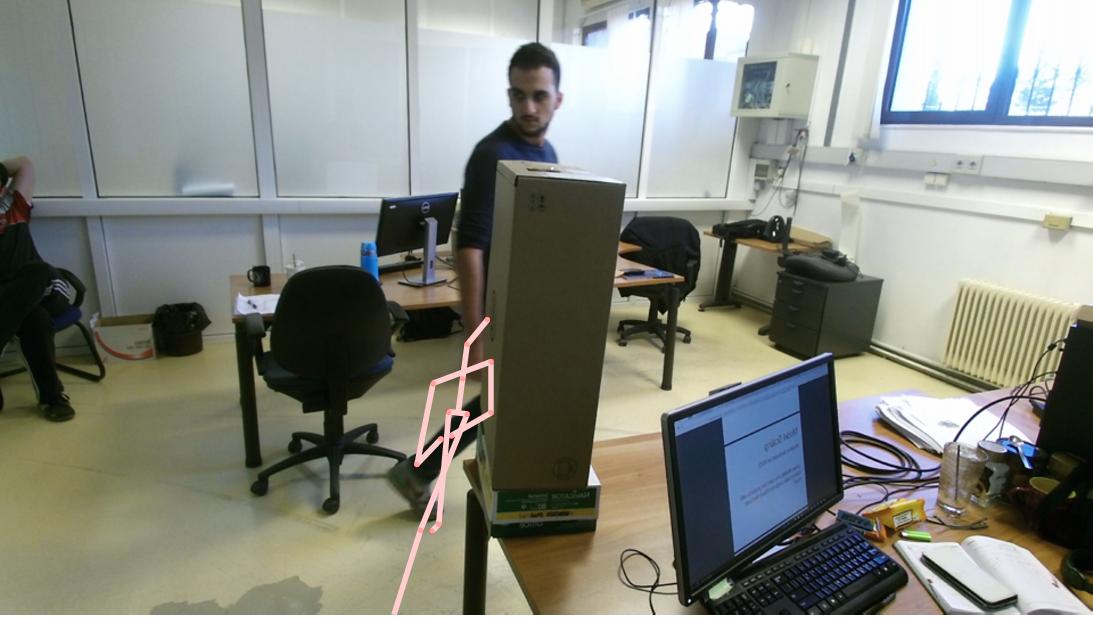} & \includegraphics[width=1.2in]{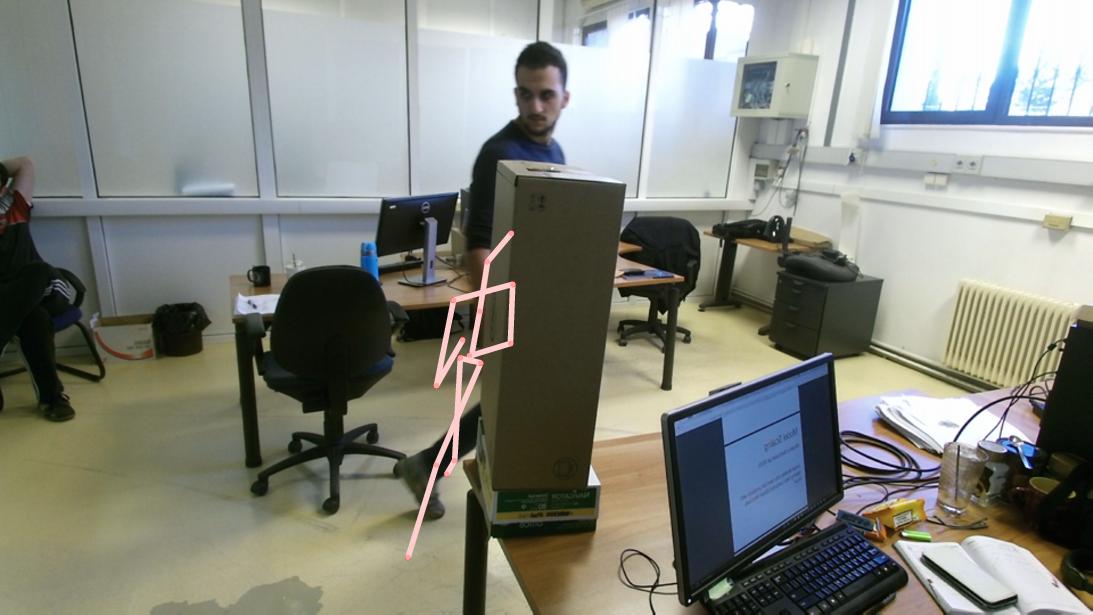} \\
		\hline	
		\textbf{4. TKF} & \includegraphics[width=1.2in]{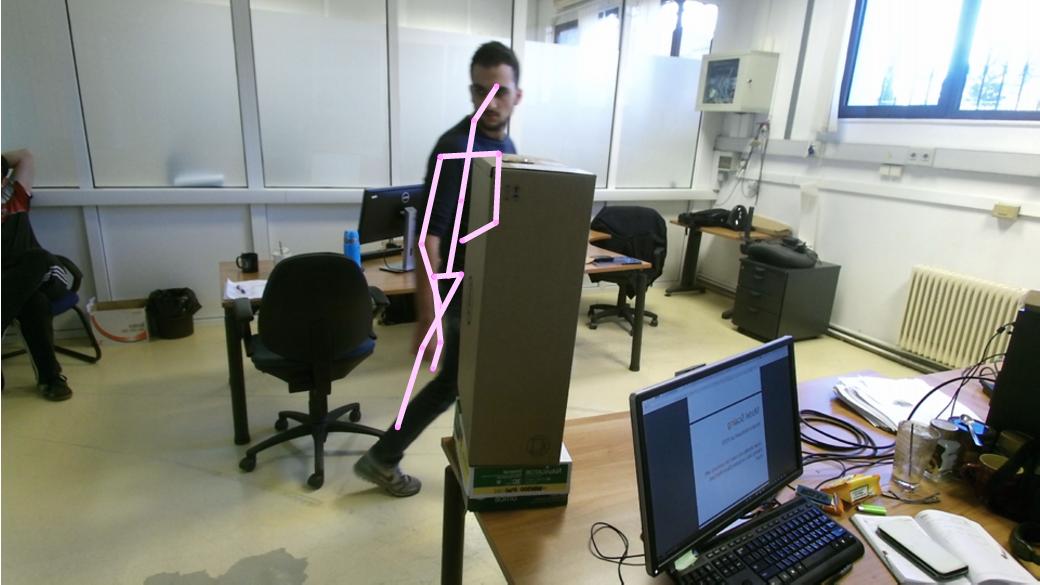}& \includegraphics[width=1.2in]{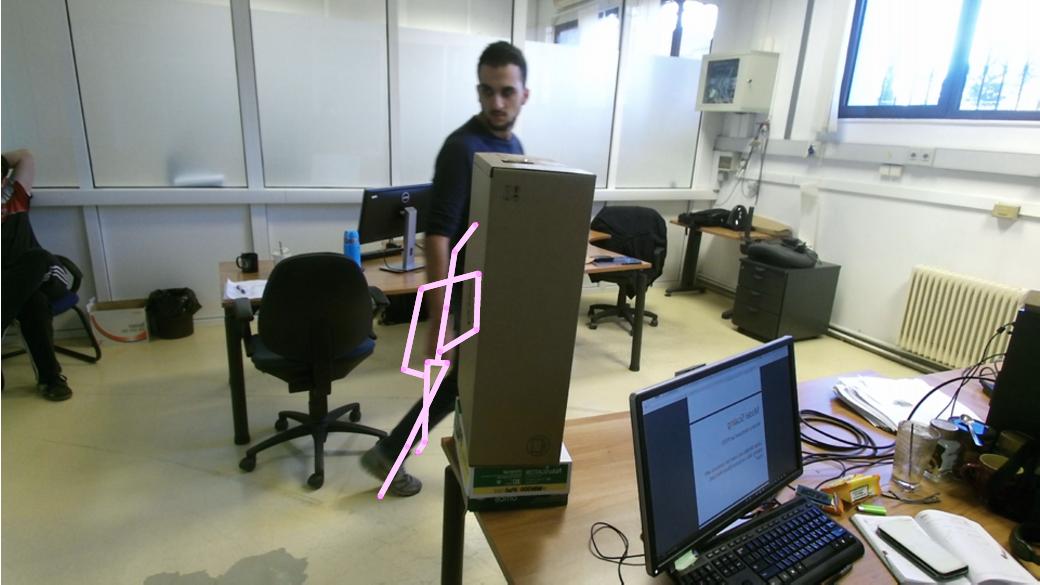} & \includegraphics[width=1.2in]{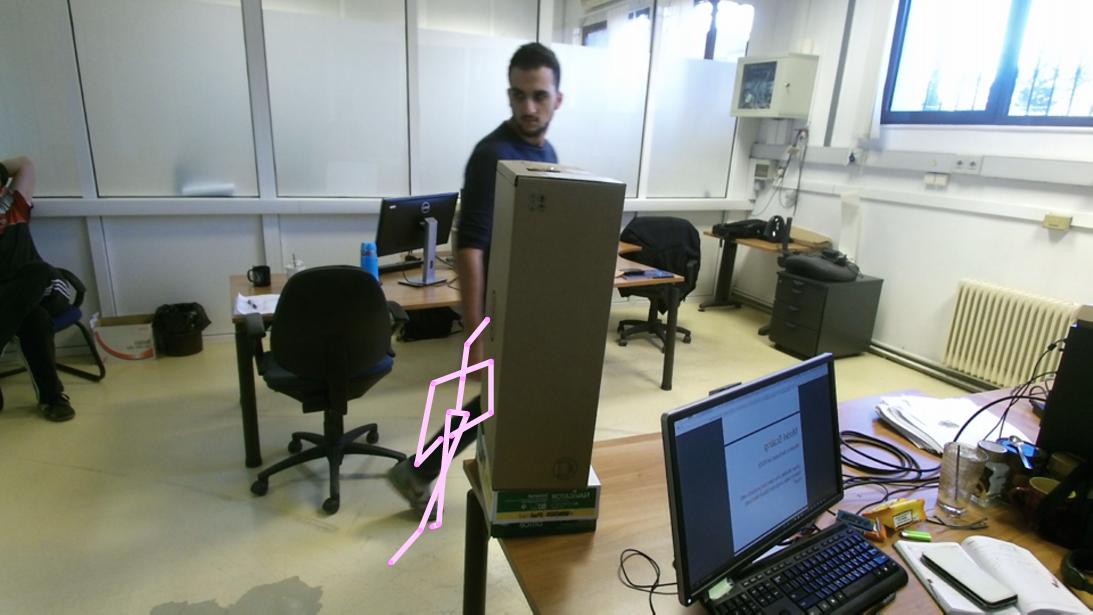} & \includegraphics[width=1.2in]{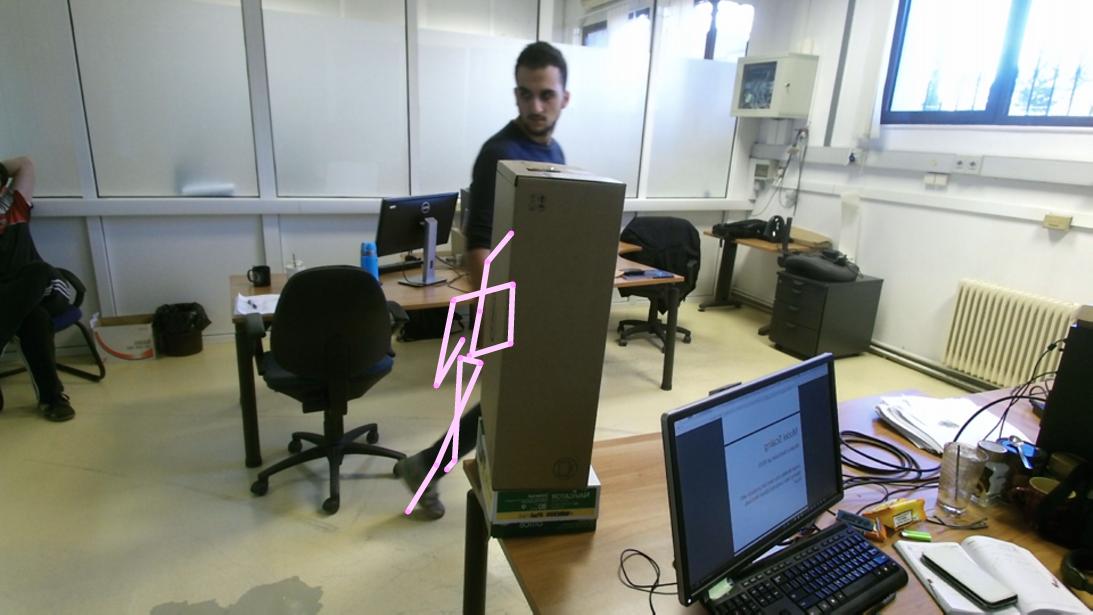} \\
		\hline	
		\textbf{5. TKF}$^c$ & \includegraphics[width=1.2in]{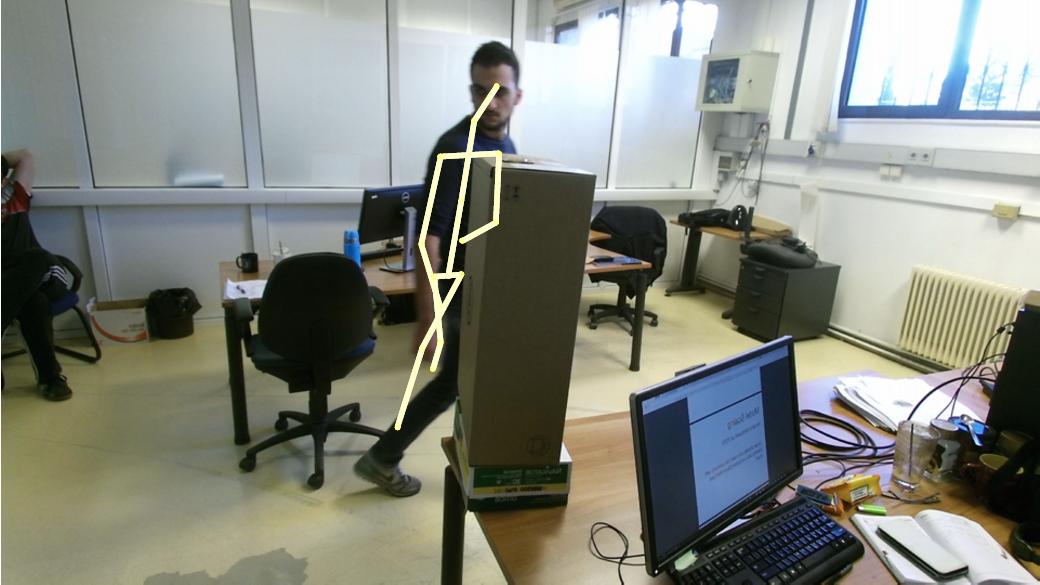}& \includegraphics[width=1.2in]{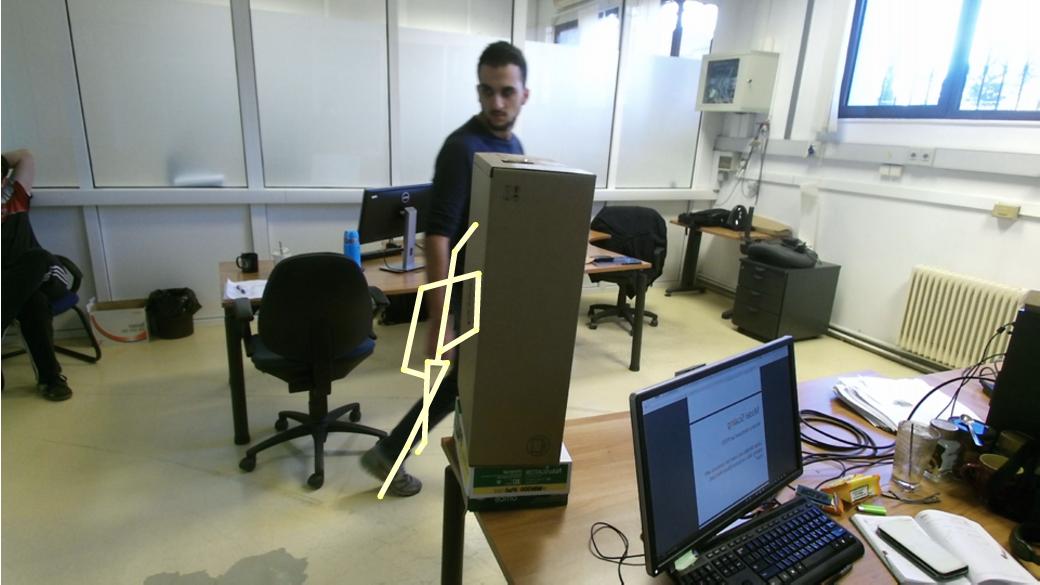} & \includegraphics[width=1.2in]{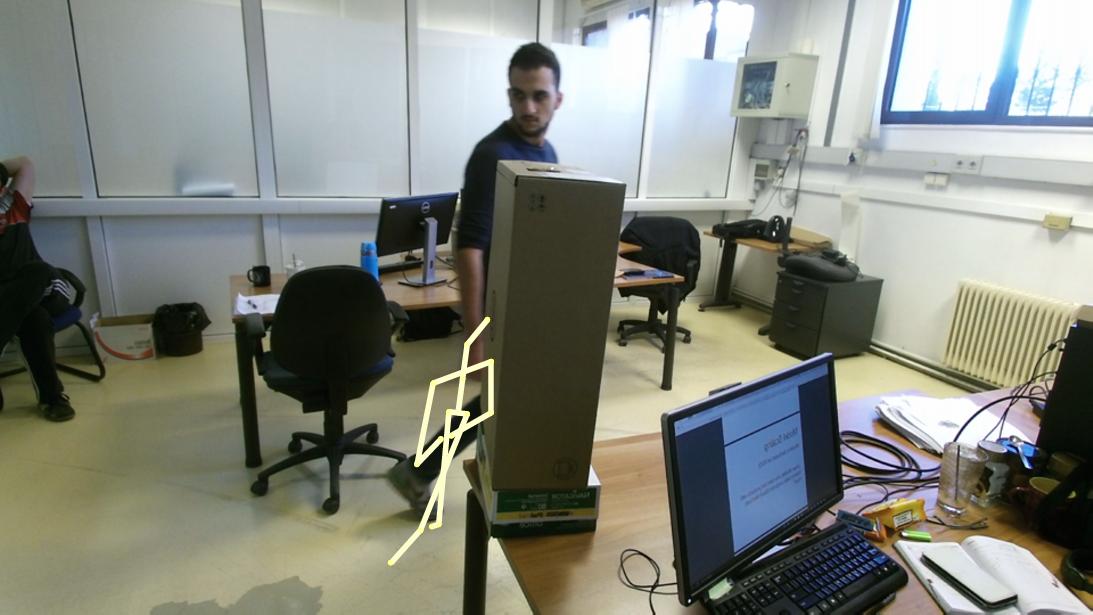} & \includegraphics[width=1.2in]{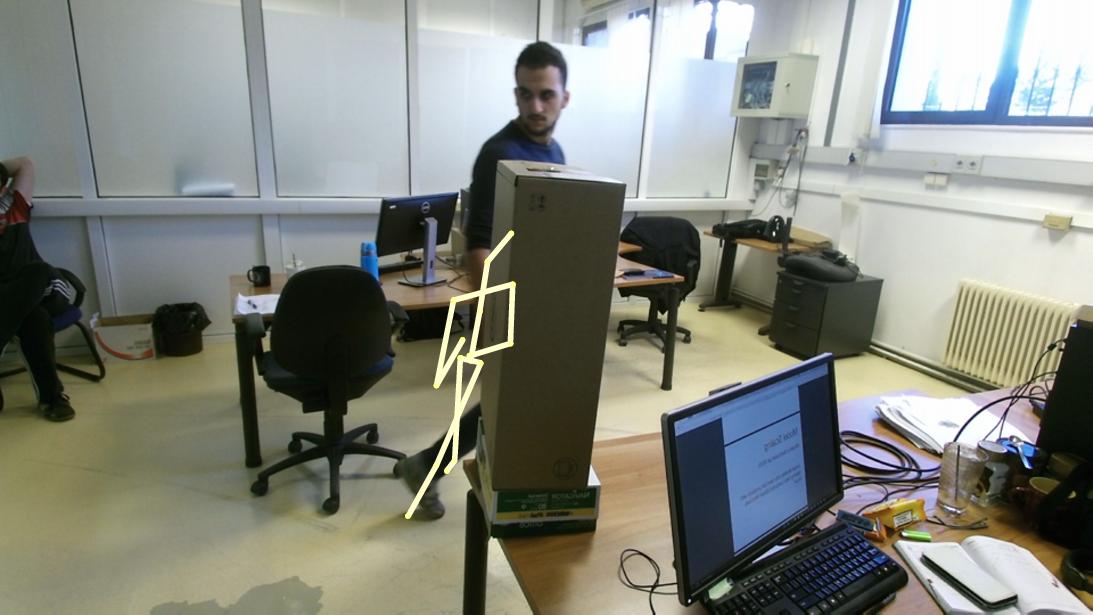} \\
		\hline	
		\textbf{6. ATKF} & \includegraphics[width=1.2in]{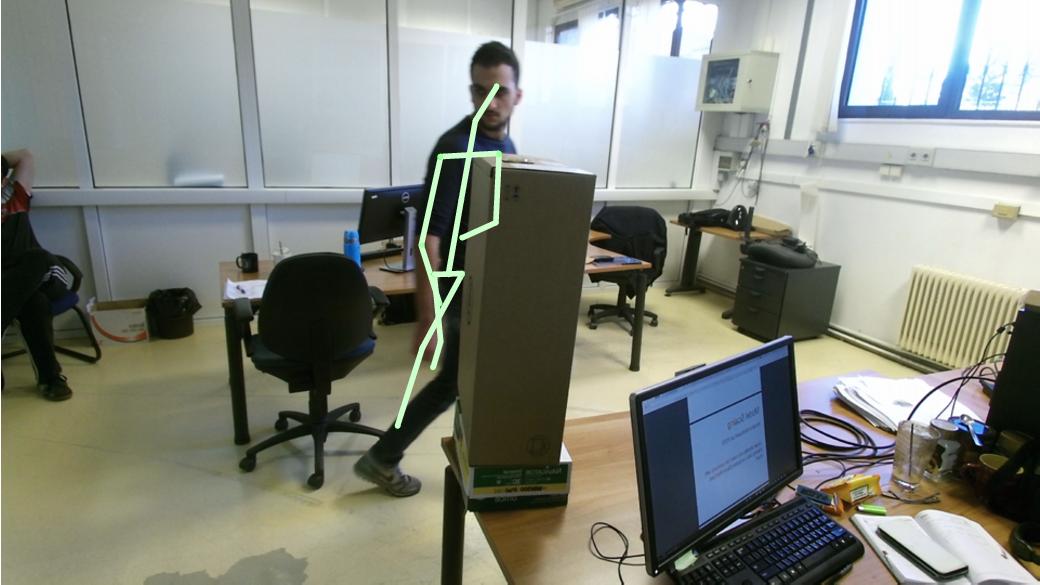}& \includegraphics[width=1.2in]{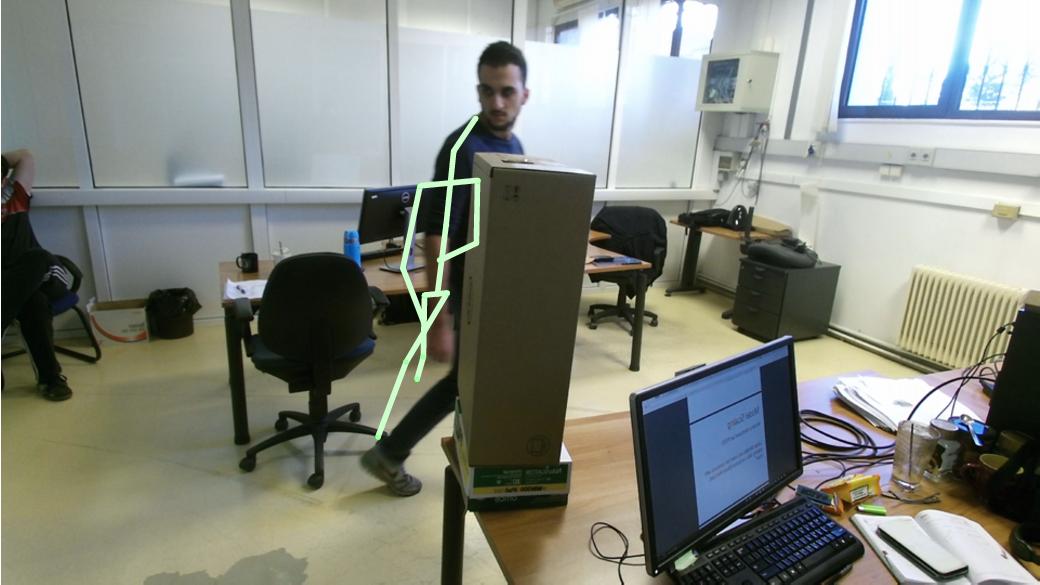} & \includegraphics[width=1.2in]{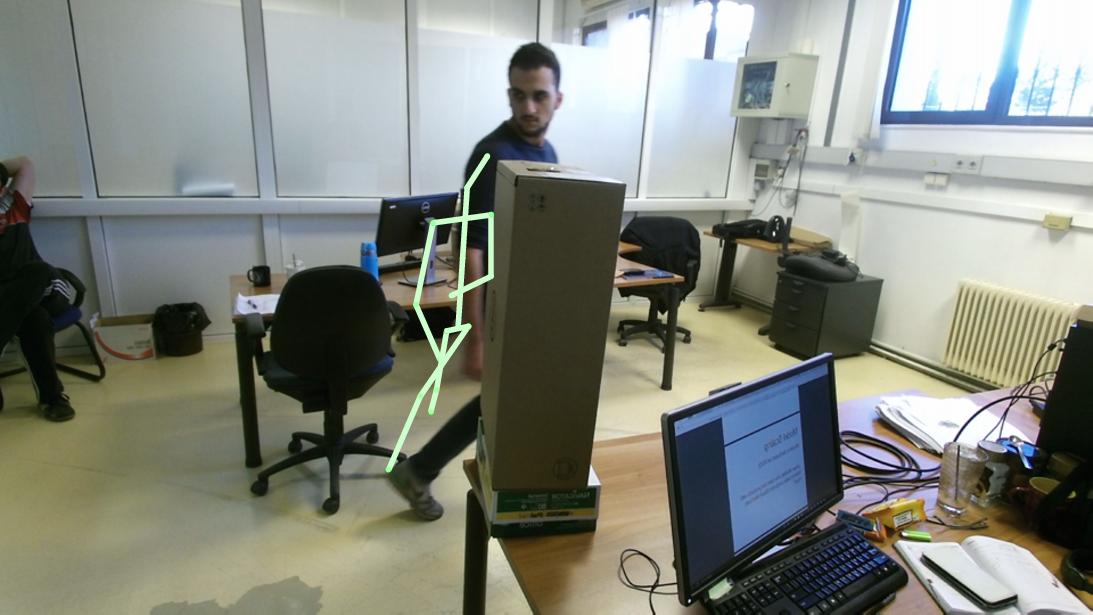} & \includegraphics[width=1.2in]{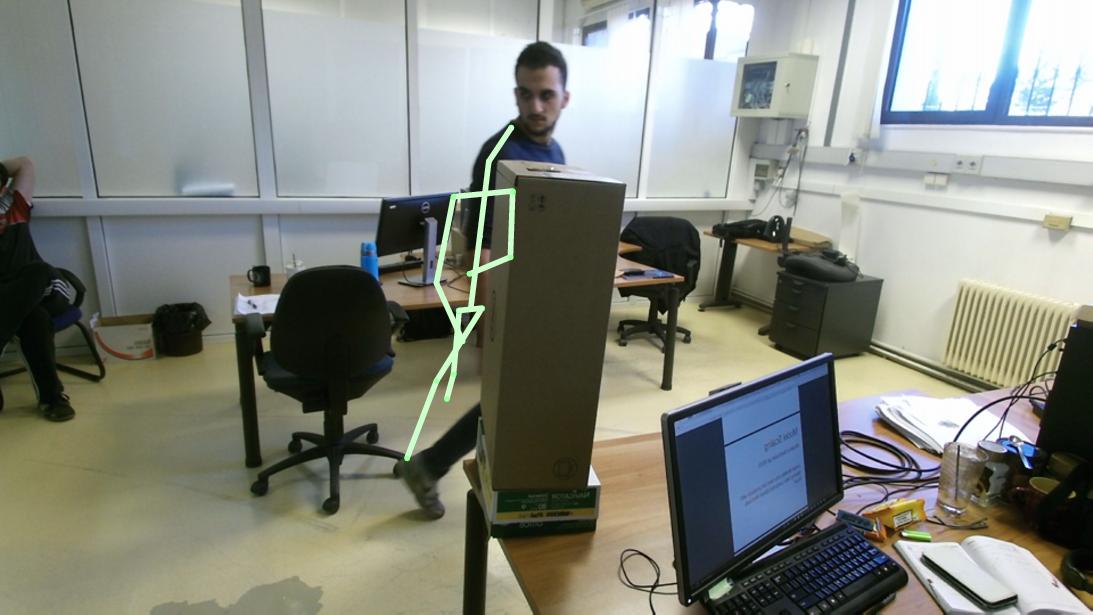} \\
		\hline			
	\end{tabular}
	\caption{Each row represents the human skeleton motion for four consecutive frames as it is obtained by 1) Kinect V2 sensor, 2) SGF, 3) KF, 4) TKF, 5) TKF$^c$ and 6) ATKF, respectively.}  
	\label{fig:Atkf}
\end{figure*}

\section*{Appendix A: The censored mean value} 

\label{App_A}
For a discrete random variable $ Z_i \sim B(p_i) $ (Bernoulli distribution) in Lemma \ref{joint_mean}, it is derived that 
\begin{equation}
E(X,Z_i) = E(X|Z_i=1)\cdot p_i.
\label{app_lemma}
\end{equation}
The censored measurement, $ y_i $ can be written in terms of Bernoulli distributions, therefore, the censored mean value is written by (\ref{app_lemma}) as,
\begin{equation}
\begin{split}
E(y_i) = \sum_{j=1}^{3^{n-1}}\E(y^*_i|a_i<y^*_i<b_i, R_j)P((a_i,b_i),R_j) + a_iP(y^*_i\leq a_i) +b_iP(y^*_i \geq b_i),
\end{split}
\label{ap:cen_mean}
\end{equation}
where the first term is the sum of all possible mean values of $ E(y_i|a_i<y_i<b_i)$ given that the rest variables lie in a region\\ $ R_j = [(L_1,U_1),...,(L_{i-1},U_{i-1}),(L_{i+1},U_{i+1}),...,(L_n,U_n)] $, where 
\[ 
(L_k,U_k)=
\begin{cases}
(-\infty,a_k) \quad  or\\
(a_k, b_k)    \quad \; \; \; or\\
(b_k, \infty)
\end{cases} 
\]
where j=1,...,$ 3^{n-1} $. $P((a_i,b_i),R_j) $ denotes the probability of variable $ \textbf{y}^* $ to lie in a region  $[(a_i,b_i),R_j]. $
It is derived by (\ref{trun_mean}) that 

\begin{equation}
\begin{split}
E(y_i) =& \sum_{j=1}^{3^{n-1}}\Big(\mu_i + \sum_{k=1}^n \sigma_{i,k}\big(F_{k}(L_k)-F_{k}(U_k)\big)_{R_j}  \Big)P(R_j)+ a_iP(y^*_i\leq a_i) +b_iP(y^*_i \geq b_i)\\
=& \sum_{k=1}^n\sum_{j=1}^{3^{n-1}}\sigma_{i,k}\big(F_{k}(L_k)-F_{k}(U_k)\big)_{R_j}P(R_j)+\mu_iP(a_i<y^*_i<b_i) + a_iP(y^*_i\leq a_i)\\ &+b_iP(y^*_i \geq b_i),
\end{split}
\label{ce_mean2}
\end{equation}
where $ \big(F_{k}(L_k)-F_{k}(U_k)\big)_{R_j} $ is the difference of functions (\ref{F}) given that the variable $ \textbf{y}^* $ lies in the region $ R_j \cup (a_i,b_i) $.\\ 
In the case where $ k \neq i $, it is derived that: 

\begin{equation}
\begin{split}
\sum_{j=1}^{3^{n-1}}\sigma_{i,k}\big(F_{k}(L_k)-F_{k}(U_k)\big)_{R_j}P(R_j)&=\sum_{j=1}^{3^{n-2}}\sigma_{i,k}\big(F_{k}(-\infty)-F_{k}(a_k)\big)_{V_j}P(V_j)\\
&+\sum_{j=1}^{3^{n-2}}\sigma_{i,k}\big(F_{k}(a_k)-F_{k}(b_k)\big)_{V_j}P(V_j)\\
&+\sum_{j=1}^{3^{n-2}}\sigma_{i,k}\big(F_{k}(b_k)-F_{k}(\infty)\big)_{V_j}P(V_j)=0 
\end{split}
\label{first}
\end{equation}
where $V_j $ is the region 
\[ 
[(L_1,U_1),...,(L_{k-1},U_{k-1}),(L_{k+1},U_{k+1}),...,(a_i,b_i),...,(L_n,U_n)]. 
\]
In the case where $ k = i $, it is derived that, 
\[
\sum_{j=1}^{N}\sigma_{i,k}\big(F_{k}(L_k)-F_{k}(U_k)\big)_{R_j}P(R_j) =  
\]

\begin{equation}
\begin{split}
&=\sum_{j=1}^{N}\sigma_{i,i}\big(F_{i}(a_i)-F_{i}(b_i)\big)_{R_j}P(R_j)\\
&=\sigma_{i,i}\big(f_{i}(a_i)-f_{i}(b_i)\big)
\end{split}
\label{second}
\end{equation}
where $ f_i(y^*_i) $ is the normal distribution of $ y^*_i \sim N(\mu_i, \sigma_{i,i}) $. Thus, by (\ref{ce_mean2})-(\ref{second}) , we have

\begin{equation}
\begin{split}
E(y_i) = \mu_iP(a_i<y^*_i<b_i)  + \sigma_{i,i}(f_i(a_i)-f_i(b_i)) +a_iP(y^*_i\leq a_i) + b_iP(y^*_i \geq b_i). 
\end{split}
\label{app_mean}
\end{equation}

\section*{Appendix B: The censored covariance matrix} 
\label{App_B}

In the same way as in censored mean (Appendix A), it is proved that the second moment of $ y_i $ is dependent only on the censoring limit \{$ a_i $, $ b_i $\}. Therefore, it is derived by Lemma \ref{joint_mean} that

\[
\begin{split}
E\big(y_i^2\big) = E(y^{*2}_i|a_i<y^*_i<b_i)P(a_i<y^*_i<b_i) + a^2_iP(y^*_i \leq a_i) + b^2_iP(y^*_i \geq b_i)
\end{split} 
\] 
where the first term \cite{bg2009moments} is equal with

\begin{equation}
\begin{split}
E(y^{*2}_i|a_i<y^*_i<b_i) &= \sigma_{i,i} + \mu_i^2 + 2\mu_i\sigma_{i,i}\frac{f_i(a_i)-f(b_i) }{P(a_i<y^*_i<b_i)} \\ &+\sigma_{i,i}\frac{(a_i-\mu_i)f_i(a_i)- (b_i-\mu_i)f_i(b_i) }{P(a_i<y^*_i<b_i)} 
\label{moment2}
\end{split}
\end{equation}
Therefore, it is derived by (\ref{moment2}) that,

\begin{equation}
\begin{split}
E\big(y_i^2\big) &= (\sigma_{i,i} + \mu_i^2)P(a_i<y^*_i<b_i) \\
&+ \sigma_{i,i}\big((a_i-\mu_i)f_i(a_i)-  (b_i-\mu_i)f_i(b_i)\big)\\
& + 2\mu_i\sigma_{i,i}(f_i(a_i)-f(b_i)) + a^2_iP(y^*_i \leq a_i) + b^2_iP(y^*_i \geq b_i) 
\end{split}
\end{equation}
Finally the censored variance is given by

\begin{equation}
\begin{split}
Var(y_i) &= \mu_i^2(1-P_{un,i})P_{un,i} + \sigma_{i,i}P_{un,i} + a^2_i(1-P_{a,i})P_{a,i}  \\
&+ b^2_i(1-P_{b,i})P_{b,i}-2a_ib_iP_{a,i}P_{b,i} - \sigma_{i,i}^2(f(a_i)-f(b_i))\\ 
& +2\mu_i\sigma_{i,i}(f_i(a_i)-f(b_i))(1-P_{un,i})\\
&+\sigma_{i,i}\big((a_i-\mu_i)f_i(a_i)- (b_i-\mu_i)f_i(b_i)\big) \\
&-2\Big( \mu_iP_{un,i} + \sigma_{i,i}\big(f_i(a_i)-f(b_i)\big)\Big)\Big(a_iP_{a,i} + b_iP_{b,i} \Big)\\ 
\end{split}
\label{app_var}
\end{equation}

where $ P_{un} = P(a_i<y^*_i<b_i) $, $ P_a= P(y^*_i \leq a_i)$ and $ P_b = P(y^*_i \geq b_i) $.

The expectation value of $ y_i \cdot y_j  $ is written by Lemma \ref{joint_mean} as:
\begin{equation}
\begin{split}
\E(y_{i}y_{j}) &=a_ib_j P(1) +b_ib_j P(3) + a_ia_jP(7) + b_ia_jP(9)\\
& + b_j\sum_{k=1}^{3^{n-2}}\E(y^*_{i}|a_i< y^*_{i} <b_i, y^*_j \geq b_j, G_{k})P(G_{k})\\
& + a_i\sum_{k=1}^{3^{n-2}}\E(y^*_{j}|a_j< y^*_{j} <b_j, y^*_i \leq a_i, G_{k})P(G_{k})\\
& +\sum_{k=1}^{3^{n-2}}\E(y_{i}y^*_{j}|a_i< y^*_{i} < b_i, a_j < y^*_{j} <b_j, G_{k})P(G_{k})\\
& + b_i\sum_{k=1}^{3^{n-2}}\E(y^*_{j}|a_j< y^*_{j} <b_j, y^*_i \geq b_i, G_{k})P(G_{k})\\
& + a_j\sum_{k=1}^{3^{n-2}}\E(y^*_{i}|a_i< y^*_{i} <b_i, y^*_j \leq a_j, G_{k})P(G_{k}),\\
\end{split}
\label{app:cen_joint}
\end{equation}
where
\[
\begin{split}
&P(1)= P(y^*_i\leq a_i, y^*_j\geq b_j),P(3)= P(y^*_i\geq b_i, y^*_j\geq b_j),\\
& P(7)= P(y^*_i\leq a_i, y^*_j\leq a_j), P(9)= P(y^*_i\geq b_i, y^*_j\leq a_j),
\end{split} 
\]
and $ G_k $ for $ k=1,...,3^{n-2} $ denote a region (as in the case of the censored mean) where the multi-variable, $ \textbf{y}^*_{-i-j} $ lies on.

Concerning the last five terms of (\ref{app:cen_joint}), it is proved (as in case of second moment) that they depend only on the censoring limits 
\{$ a_i $, $ b_i $, $ a_j $, $ b_j $\}; thus, (\ref{app:cen_joint}) can be written as
\begin{equation}
\begin{split}
\E(y_{i}y_{j})&=a_ib_j P(1) +b_ib_j P(3) + a_ia_jP(7) + b_ia_jP(9)\\
& + b_j\E(y^*_{i}|a_i< y^*_{i} <b_i, y^*_j \geq b_j)P(2)\\
& + a_i\E(y^*_{j}|a_j< y^*_{j} <b_j, y^*_i \leq a_i)P(4)\\
& +\E(y^*_{i}y^*_{j}|a_i< y^*_{i} < b_i, a_j < y^*_{j} <b_j)P(5)\\
& + b_i\E(y^*_{j}|a_j< y^*_{j} <b_j, y^*_i \geq b_i)P(6)\\
& + a_j\E(y^*_{i}|a_i< y^*_{i} <b_i, y^*_j \leq a_j)P(8).\\
\end{split}
\label{cen_joint2}
\end{equation}
where
\[ 
\begin{split}
&P(2)= P(a_i <y^*_i< b_i, y^*_j\geq b_j),\\
&P(4)= P(y^*_i\leq a_i, a_j<y^*_j<b_j),\\
&P(5)= P(a_i<y^*_i<b_i, a_j<y^*_j< b_j),\\
&P(6)= P(y^*_i\geq b_i, a_j<y^*_j< b_j),\\
& P(8)= P(a_i<y^*_i< b_i, y^*_j\leq a_j).
\end{split} 
\]
At this point it should be noted that the truncated moments $ \E(y^*_{i}|\cdot) $ and $ \E(y^*_{i}y^*_{j}|\cdot)  $ in (\ref{cen_joint2}) are calculated by (\ref{trun_mean}) and  (\ref{trun_joint}), respectively. Although, the functions (\ref{F}), (\ref{F2}) in our case (censoring measurements) are defined only for the variables $ y^*_i $ and $y^*_j $, i.e.:
\[
F_i(x)=\frac{\int_{a_j}^{b_j}f_{Y^*_i,Y^*_j}(x,y^*_j)dy^*_{j}}{P(a_j<y^*_j<b_j, a_i<y^*_i<b_i)},
\]
and 
\[
F_{i,j}(x,y)=\frac{f_{Y^*_i,Y^*_j}(x,y)}{P(a_j<y^*_j<b_j, a_i<y^*_i<b_i)}.
\]
Therefore, the covariance matrix can be defined by the terms (\ref{app_mean}), (\ref{app_var}) and (\ref{cen_joint2}).

\section*{Appendix C: Evaluation of the Probabilities of the latent measurement to belong to the censored or uncensored region }

\label{Prob}
The mean of the latent measurement $ \textbf{y}^*_{k} $ given the saturated measurement $\textbf{y}_{k-1} $ is
\begin{equation}
\textbf{m}_k=\E(\textbf{H}\textbf{x}_k+\textbf{v}_k|\textbf{y}_{k-1})=\textbf{H}\E(\textbf{x}_k|\textbf{y}_{k-1})=\textbf{H}\hat{\textbf{x}}^-_{k}.
\label{eq:mean}
\end{equation}
The covariance matrix of $ \textbf{y}^*_{k}-\textbf{H}\hat{\textbf{x}}^-_{k} $ is
\[
\begin{split}
\textbf{S}_k=\Cov(\textbf{y}^*_{k}-\textbf{H}\hat{\textbf{x}}^-_{k})=\Cov(\textbf{H}\textbf{x}_k+\textbf{v}_k-\textbf{H}\hat{\textbf{x}}^-_{k}) =\Cov\big(\textbf{H}(\textbf{x}_k-\hat{\textbf{x}}^-_{k}))+\Cov(\textbf{v}_k)
\end{split}
\]
thus,
\begin{equation}
\textbf{S}_{k}=\textbf{H}\textbf{P}^-_{k}\textbf{H}^T+\textbf{R}.
\label{eq:cov}
\end{equation}
By (\ref{eq:mean}) and (\ref{eq:cov}) it is clear that $\textbf{y}^*_{k}|\textbf{y}_{k-1}\sim~N(\textbf{m}_k, \textbf{S}_k)$. The probability $ D^i_{\textbf{a},k} $ of the $ i^{th} $ component of the latent measurement $ \textbf{y}^*_k $ to be equal or less than $ a_i $ is
\[
\begin{split}
D^i_{\textbf{a},k}= P( y^*_{k,i}\leq a_i )= P\Bigg(\frac{y^*_{k,i}-m_{k,i}}{\sqrt{s_{(i,i),k}}}\leq \frac{a_i-m_{k,i}}{\sqrt{s_{(i,i),k}}} \Bigg)
\end{split}
\]
\begin{equation}
=\Phi\Bigg( \frac{a_i-m_{k,i}}{\sqrt{s_{(i,i),k}}}  \Bigg).
\end{equation}
Following the same procedure, the probability $ D^i_{\textbf{b},k} $ of the $ i^{th} $ component of the latent measurement $ \textbf{y}^*_k $ to be equal or bigger than $ b_i $ is
\begin{equation}
D^i_{\textbf{b},k}=1-\Phi\Bigg( \frac{b_i-m_{k,i}}{\sqrt{s_{(i,i),k}}} \Bigg).
\end{equation}
Finally, the probability of the $ i^{th} $ component of the latent measurement $ \textbf{y}^*_k $ to lie in the uncensored region $ (a_i, b_i) $ is
\begin{equation}
D^i_{un,k}=1-D^i_{\textbf{a},k}-D^i_{\textbf{b},k} .
\end{equation}

\bibliographystyle{unsrt} 

\begin{thebibliography}{}

\end{thebibliography}


\begin{thebibliography}{10}
	
	\bibitem{zhang2013activity}
	Chunguang Zhang and Lifang Zhang.
	\newblock Activity recognition in smart homes based on second-order hidden
	markov model.
	\newblock {\em International Journal of Smart Home}, 7(6):237--244, 2013.
	
	\bibitem{li2015adaptive}
	Zhen Li, Zhiqiang Wei, Yaofeng Yue, Hao Wang, Wenyan Jia, Lora~E Burke, Thomas
	Baranowski, and Mingui Sun.
	\newblock An adaptive hidden markov model for activity recognition based on a
	wearable multi-sensor device.
	\newblock {\em Journal of medical systems}, 39(5):1--10, 2015.
	
	\bibitem{6626306}
	J.~Wang, Z.~Liu, Y.~Wu, and J.~Yuan.
	\newblock Learning actionlet ensemble for 3d human action recognition.
	\newblock {\em IEEE Transactions on Pattern Analysis and Machine Intelligence},
	36(5):914--927, May 2014.
	
	\bibitem{7467541}
	X.~Yang and Y.~Tian.
	\newblock Super normal vector for human activity recognition with depth
	cameras.
	\newblock {\em IEEE Transactions on Pattern Analysis and Machine Intelligence},
	PP(99):1--1, 2016.
	
	\bibitem{destelle2014low}
	Fran{\c{c}}ois Destelle, Amin Ahmadi, Noel~E O'Connor, Kieran Moran, Anargyros
	Chatzitofis, Dimitrios Zarpalas, and Petros Daras.
	\newblock Low-cost accurate skeleton tracking based on fusion of kinect and
	wearable inertial sensors.
	\newblock In {\em 2014 22nd European Signal Processing Conference (EUSIPCO)},
	pages 371--375. IEEE, 2014.
	
	\bibitem{moon2016multiple}
	Sungphill Moon, Youngbin Park, Dong~Wook Ko, and Il~Hong Suh.
	\newblock Multiple kinect sensor fusion for human skeleton tracking using
	kalman filtering.
	\newblock {\em International Journal of Advanced Robotic Systems}, 13, 2016.
	
	\bibitem{asteriadis2013estimating}
	Stylianos Asteriadis, Anargyros Chatzitofis, Dimitrios Zarpalas, Dimitrios~S
	Alexiadis, and Petros Daras.
	\newblock Estimating human motion from multiple kinect sensors.
	\newblock In {\em Proceedings of the 6th International Conference on Computer
		Vision/Computer Graphics Collaboration Techniques and Applications}, page~3.
	ACM, 2013.
	
	\bibitem{berti2012kalman}
	Enrique~Martinez Berti, Antonio Jos{\'e}~Sanchez Salmer{\'o}n, and Francesc
	Benimeli.
	\newblock Kalman filter for tracking robotic arms using low cost 3d vision
	systems.
	\newblock In {\em The Fifth International Conference on Advances in
		Computer-Human Interactions}, pages 236--240, 2012.
	
	\bibitem{du2015hierarchical}
	Yong Du, Wei Wang, and Liang Wang.
	\newblock Hierarchical recurrent neural network for skeleton based action
	recognition.
	\newblock In {\em Proceedings of the IEEE Conference on Computer Vision and
		Pattern Recognition}, pages 1110--1118, 2015.
	
	\bibitem{edwards2014low}
	Matthew Edwards and Richard Green.
	\newblock Low-latency filtering of kinect skeleton data for video game control.
	\newblock In {\em Proceedings of the 29th International Conference on Image and
		Vision Computing New Zealand}, pages 190--195. ACM, 2014.
	
	\bibitem{kalman1960new}
	Rudolph~Emil Kalman.
	\newblock A new approach to linear filtering and prediction problems.
	\newblock {\em Journal of basic Engineering}, 82(1):35--45, 1960.
	
	\bibitem{lim2000kalman}
	Teng~Joon Lim and Yao Ma.
	\newblock The kalman filter as the optimal linear minimum mean-squared error
	multiuser cdma detector.
	\newblock {\em IEEE Transactions on information theory}, 46(7):2561--2566,
	2000.
	
	\bibitem{savitzky1964smoothing}
	Abraham Savitzky and Marcel~JE Golay.
	\newblock Smoothing and differentiation of data by simplified least squares
	procedures.
	\newblock {\em Analytical chemistry}, 36(8):1627--1639, 1964.
	
	\bibitem{arulampalam2002tutorial}
	M~Sanjeev Arulampalam, Simon Maskell, Neil Gordon, and Tim Clapp.
	\newblock A tutorial on particle filters for online nonlinear/non-gaussian
	bayesian tracking.
	\newblock {\em IEEE Transactions on signal processing}, 50(2):174--188, 2002.
	
	\bibitem{moore1956estimation}
	PG~Moore.
	\newblock The estimation of the mean of a censored normal distribution by
	ordered variables.
	\newblock {\em Biometrika}, 43(3/4):482--485, 1956.
	
	\bibitem{hampshire1992tobit}
	JB~Hampshire and John~W. Strohbehn.
	\newblock Tobit maximum-likelihood estimation for stochastic time series
	affected by receiver saturation.
	\newblock {\em IEEE Transactions on Information Theory}, 38(2):457--469, 1992.
	
	\bibitem{allik2014tobit}
	Bethany Allik.
	\newblock {\em The tobit kalman filter: An estimator for censored data}.
	\newblock PhD thesis, University of Delaware, 2014.
	
	\bibitem{loumponias2016}
	K~Loumponias, N~Vretos, P~Daras, and G~Tsaklidis.
	\newblock Using tobit kalman filtering in order to improve the motion recorded
	by microsoft kinect.
	\newblock {\em Proceedings of the International workshop on applied probability
		(IWAP), Toronto, Canada}, 2016.
	
	\bibitem{loumponias2016using}
	K~Loumponias, N~Vretos, G~Tsaklidis, and P~Daras.
	\newblock Using kalman filter and tobit kalman filter in order to improve the
	motion recorded by kinect sensor ii.
	\newblock {\em Proceedings of the 29th Panhellenic Statistics Conference,
		Naousa, Greece}, pages 322--334, 2016.
	
	\bibitem{grewal2011kalman}
	Mohinder~S Grewal.
	\newblock {\em Kalman filtering}.
	\newblock Springer, 2011.
	
	\bibitem{moeslund2011visual}
	Thomas~B Moeslund, Adrian Hilton, Volker Kr{\"u}ger, and Leonid Sigal.
	\newblock {\em Visual analysis of humans}.
	\newblock Springer, 2011.
	
	\bibitem{skeljoin}
	Microsoft Kinect.
	\newblock Skeletal joint smoothing white paper.
	\newblock \url{https://msdn.microsoft.com/en-us/library/jj131429.aspx/}.
	\newblock Accessed: 2016-07-07.
	
	\bibitem{wei1994time}
	William Wu-Shyong Wei.
	\newblock {\em Time series analysis}.
	\newblock Addison-Wesley publ Reading, 1994.
	
	\bibitem{kalekar2004time}
	Prajakta~S Kalekar.
	\newblock Time series forecasting using holt-winters exponential smoothing.
	\newblock {\em Kanwal Rekhi School of Information Technology}, 4329008:1--13,
	2004.
	
	\bibitem{whitmore1997modelling}
	GA~Whitmore and Fred Schenkelberg.
	\newblock Modelling accelerated degradation data using wiener diffusion with a
	time scale transformation.
	\newblock {\em Lifetime data analysis}, 3(1):27--45, 1997.
	
	\bibitem{la1996design}
	Barbara~F La~Scala and Robert~R Bitmead.
	\newblock Design of an extended kalman filter frequency tracker.
	\newblock {\em IEEE Transactions on Signal Processing}, 44(3):739--742, 1996.
	
	\bibitem{larsen2011unscented}
	Anders Boesen~Lindbo Larsen, S{\o}ren Hauberg, and Kim~Steenstrup Pedersen.
	\newblock Unscented kalman filtering for articulated human tracking.
	\newblock In {\em Scandinavian Conference on Image Analysis}, pages 228--237.
	Springer, 2011.
	
	\bibitem{gustafsson2012some}
	Fredrik Gustafsson and Gustaf Hendeby.
	\newblock Some relations between extended and unscented kalman filters.
	\newblock {\em IEEE Transactions on Signal Processing}, 60(2):545--555, 2012.
	
	\bibitem{julier2002scaled}
	Simon~J Julier.
	\newblock The scaled unscented transformation.
	\newblock In {\em Proceedings of the 2002 American Control Conference (IEEE
		Cat. No. CH37301)}, volume~6, pages 4555--4559. IEEE, 2002.
	
	\bibitem{deutscher2000articulated}
	Jonathan Deutscher, Andrew Blake, and Ian Reid.
	\newblock Articulated body motion capture by annealed particle filtering.
	\newblock In {\em Computer Vision and Pattern Recognition, 2000. Proceedings.
		IEEE Conference on}, volume~2, pages 126--133. IEEE, 2000.
	
	\bibitem{allik2016tobit}
	Bethany Allik, Cory Miller, Michael~J Piovoso, and Ryan Zurakowski.
	\newblock The tobit kalman filter: An estimator for censored measurements.
	\newblock {\em IEEE Transactions on Control Systems Technology},
	24(1):365--371, 2016.
	
	\bibitem{han2018improved}
	Fei Han, Hongli Dong, Zidong Wang, Gongfa Li, and Fuad~E Alsaadi.
	\newblock Improved tobit kalman filtering for systems with random parameters
	via conditional expectation.
	\newblock {\em Signal Processing}, 2018.
	
	\bibitem{zhu2016co}
	Wentao Zhu, Cuiling Lan, Junliang Xing, Wenjun Zeng, Yanghao Li, Li~Shen, and
	Xiaohui Xie.
	\newblock Co-occurrence feature learning for skeleton based action recognition
	using regularized deep lstm networks.
	\newblock {\em arXiv preprint arXiv:1603.07772}, 2016.
	
	\bibitem{gorry1990general}
	Peter~A Gorry.
	\newblock General least-squares smoothing and differentiation by the
	convolution (savitzky-golay) method.
	\newblock {\em Analytical Chemistry}, 62(6):570--573, 1990.
	
	\bibitem{amor2016action}
	Boulbaba~Ben Amor, Jingyong Su, and Anuj Srivastava.
	\newblock Action recognition using rate-invariant analysis of skeletal shape
	trajectories.
	\newblock {\em IEEE transactions on pattern analysis and machine intelligence},
	38(1):1--13, 2016.
	
	\bibitem{turnbull1976empirical}
	Bruce~W Turnbull.
	\newblock The empirical distribution function with arbitrarily grouped,
	censored and truncated data.
	\newblock {\em Journal of the Royal Statistical Society. Series B
		(Methodological)}, pages 290--295, 1976.
	
	\bibitem{miller2011survival}
	Rupert~G Miller~Jr.
	\newblock {\em Survival analysis}, volume~66.
	\newblock John Wiley \& Sons, 2011.
	
	\bibitem{tobin1958estimation}
	James Tobin.
	\newblock Estimation of relationships for limited dependent variables.
	\newblock {\em Econometrica: journal of the Econometric Society}, pages 24--36,
	1958.
	
	\bibitem{harvey1990forecasting}
	Andrew~C Harvey.
	\newblock {\em Forecasting, structural time series models and the Kalman
		filter}.
	\newblock Cambridge university press, 1990.
	
	\bibitem{bg2009moments}
	Manjunath BG and Stefan Wilhelm.
	\newblock Moments calculation for the double truncated multivariate normal
	density.
	\newblock {\em arXiv}.
	
	\bibitem{allik2014estimation}
	Bethany Allik, Cory Miller, Michael~J Piovoso, and Ryan Zurakowski.
	\newblock Estimation of saturated data using the tobit kalman filter.
	\newblock In {\em 2014 American Control Conference}, pages 4151--4156. IEEE,
	2014.
	
	\bibitem{masreliez1977robust}
	Cl~Masreliez and R~Martin.
	\newblock Robust bayesian estimation for the linear model and robustifying the
	kalman filter.
	\newblock {\em IEEE transactions on Automatic Control}, 22(3):361--371, 1977.
	
	\bibitem{kar2010skeletal}
	Abhishek Kar.
	\newblock Skeletal tracking using microsoft kinect.
	\newblock {\em Methodology}, 1:1--11, 2010.
	
	\bibitem{kinect2016}
	Kinect v2, November 2016.
	
	\bibitem{mobini2014accuracy}
	Amir Mobini, Saeed Behzadipour, and Mahmoud Saadat~Foumani.
	\newblock Accuracy of kinect’s skeleton tracking for upper body
	rehabilitation applications.
	\newblock {\em Disability and Rehabilitation: Assistive Technology},
	9(4):344--352, 2014.
	
	\bibitem{galna2014accuracy}
	Brook Galna, Gillian Barry, Dan Jackson, Dadirayi Mhiripiri, Patrick Olivier,
	and Lynn Rochester.
	\newblock Accuracy of the microsoft kinect sensor for measuring movement in
	people with parkinson's disease.
	\newblock {\em Gait \& posture}, 39(4):1062--1068, 2014.
	
	\bibitem{hamilton1994time}
	James~Douglas Hamilton.
	\newblock {\em Time series analysis}, volume~2.
	\newblock Princeton university press Princeton, 1994.
	
\end{thebibliography}

\end{document}